\newcommand{\C}{{\mathbb C}}
\newcommand{\N}{{\mathbb N}}
\newcommand{\Z}{{\mathbb Z}}
\newcommand{\op}{\operatorname}
\newcommand{\End}{\op{End}}
\newcommand{\Hom}{\op{Hom}}
\newcommand{\Tr}{\op{Tr}}
\newcommand{\abs}[1]{{\left\vert{#1}\right\vert}}
\newcommand{\TQO}{\text{TQO}}
\newcommand{\interior}[1]{{#1}^{\circ}}
\newtheorem{theorem}{Theorem}[section]
\newtheorem{proposition}[theorem]{Proposition}
\newtheorem{lemma}[theorem]{Lemma}
\newtheorem{lemma-definition}[theorem]{Lemma-Definition}
\theoremstyle{definition}
\newtheorem{definition}[theorem]{Definition}
\newtheorem{remark}[theorem]{Remark}
\begin{document}

\title{Kitaev's quantum double model as an error correcting code}

\author{Shawn X. Cui}
\email{cui177@purdue.edu}
\affiliation{Departments of Mathematics, Physics and Astronomy, Purdue University, West Lafayette, IN 47907}

\author{Dawei Ding}
\email{dding@stanford.edu}
\author{Xizhi Han}
\email{hanxzh@stanford.edu}
\author{Geoffrey Penington}
\email{geoffp@stanford.edu}
\author{Daniel Ranard}
\email{dranard@stanford.edu}
\author{Brandon C. Rayhaun}
\email{brayhaun@stanford.edu}
\author{Zhou Shangnan}
\email{snzhou@stanford.edu}
\affiliation{Stanford Institute for Theoretical Physics, Stanford University, Stanford, CA 94305}

\maketitle

\begin{abstract}
 Kitaev's quantum double models in 2D provide some of the most commonly studied examples of topological quantum order.  In particular, the ground space is thought to yield a quantum error-correcting code.  We offer an explicit proof that this is the case for arbitrary finite groups. Actually a stronger claim is shown: any two states with zero energy density in some contractible region must have the same reduced state in that region.  Alternatively, the local properties of a gauge-invariant state are fully determined by specifying that its holonomies in the region are trivial.  We contrast this result with the fact that local properties of gauge-invariant states are not generally determined by specifying all of their non-Abelian fluxes --- that is, the Wilson loops of lattice gauge theory do not form a complete commuting set of observables. We also note that the methods developed by P. Naaijkens (PhD thesis, 2012) under a different context can be adapted to provide another proof of the error correcting property of Kitaev's model. Finally, we compute the topological entanglement entropy in Kitaev's model, and show, contrary to previous claims in the literature, that it does not depend on whether the ``log dim R'' term is included in the definition of entanglement entropy.
\end{abstract}

\tableofcontents

\section{Introduction}
Topological phases of matter in two spatial dimensions are gapped quantum liquids that exhibit exotic properties such as stable ground state degeneracy, stable long-range entanglement, existence of quasi-particle excitations, (possibly) non-Abelian exchange statistics, etc. These phases are characterized by a new type of order, topological quantum order ($\TQO$), that is beyond the conventional Landau theory of spontaneous symmetry breaking and local order parameters.\footnote{However, many such phases can be incorporated into the Landau paradigm by generalizing to \emph{non-local} order parameters and spontaneous breaking of \emph{higher form} symmetries \cite{gaiotto2015generalized}.}  An important application of topological phases of matter is in topological quantum computing \cite{kitaev2003fault,freedman2002modular}, where information is encoded in non-local degrees of freedom and processed by manipulating quasi-particle excitations.  

A large class of $\TQO$s can be realized as lattice models in quantum spin systems where the Hamiltonian is given as a sum of pairwise commuting and geometrically local projectors. Examples of such constructions include the Levin-Wen string-net lattice models \cite{levin2005string} and Kitaev's quantum double models \cite{kitaev2003fault}. In \cite{bravyi2010topological, bravyi2011short}, the authors gave a mathematically rigorous proof of gap stability under weak perturbations for quantum spin Hamiltonians satisfying two physically plausible conditions, $\TQO$-1 and $\TQO$-2. Roughly, $\TQO$-1 states that the ground state space is a quantum error correcting code with a macroscopic distance, and $\TQO$-2 means that the local ground state space coincides with the global one. See \cite{bravyi2010topological, bravyi2011short} or \S\ref{subsec:tqo} for a formal definition.


It is widely believed that Kitaev's quantum double models satisfy $\TQO$-1 and $\TQO$-2. This would rigorously justify the inherit fault-tolerance of  Kitaev's models for topological quantum computing.  Previously, it has been shown that Kitaev's model for finite Abelian groups satisfy the $\TQO$ conditions. See \cite{alicki2007statistical, bachmann2017local}, although some of the results in the above mentioned references may be stated in a different form. See also \cite{cha2018complete} for some relevant results. But it is not clear if the methods in those references can be generalized to the case of non-Abelian groups which are significantly more complicated than the Abelian case. In this paper, we provide a proof of the $\TQO$ conditions that work for all finite groups. In fact, we prove a stronger property of Kitaev's model that simultaneously implies $\TQO$-1 and $\TQO$-2. Our result can be informally stated as: 
$$\textit{States with locally zero energy density are locally indistinguishable.}$$
See Theorem \ref{thm:main} for a formal statement. 

Another motivation for the current work is to construct quantum error correcting codes in lattice models. While the toric code and its variant surface code have been studied extensively as error correcting codes, the corresponding study for the case of non-Abelian groups seems insufficient. We have shown in this paper that all non-Abelian Kitaev's models are quantum error correcting codes with macroscopic distance. It will be interesting to see if there are non-Abelian models that outperform the toric code in terms of code properties, such as code distance and threshold. We leave this as a future direction. 

It should be noted that the operator algebra methods developed by P. Naaijkens \cite{naaijkens2012anyons} while studying translation invariant Kitaev's models in the infinite plane (in a different context from the current focus) can also be adapted to our case to give an alternative proof of the main result (Theorem \ref{thm:main}). The current paper, in contrast, considers Kitaev's model on a closed surface and is motivated from the perspective of quantum error correction and topological quantum computing. Our paper showed some of the basic results on Kitaev's model explicitly and we hope that these results and their proof help to reach to a wider community in quantum information science. 

In addition, we show in Section \S\ref{subsec:holonomies} that in Kitaev's model for certain non-Abelian groups, Wilson loops do not form a complete commuting set of observables. That is, there are distinct gauge-invariant states that cannot be distinguished by Wilson loop observables. This is in contrast with toric code where Wilson loop observables completely characterize a gauge-invariant state. For discussions of continuous gauge groups, see \cite{sengupta1994gauge}.

In Section \S\ref{subsec:TEE}, we calculate entanglement entropies in Kitaev's model for the three different definitions of entanglement entropy in gauge theories. In particular, we find that we obtain the same answer for the topological entanglement entropy (TEE) using both the ``algebraic'' and the ``extended Hilbert space'' definitions. This is in contrast with previous claims in the literature \cite{soni2016aspects} that only the extended Hilbert space entropy (which differs from the algebraic entropy by the addition of a ``$\log\mathrm{dim}(R)$'' term) gave the standard answer for the TEE. However, the third definition of entanglement entropy, the distillable entropy, gives a different value for the topological entanglement entropy.

Kitaev's models are actually a special case of Levin-Wen models, which  also conjecturally satisfy the $\TQO$ conditions. Originally, Kitaev's models were only defined for finite groups. However this construction was generalized to finite dimensional Hopf $\C^*$-algebras in \cite{buerschaper2013hierarchy}, and then further generalized to weak Hopf $\C^*$-algebras (or unitary quantum groupoids) in \cite{chang2014kitaev}. On the other hand, the Levin-Wen model takes as input any unitary fusion category. In \cite{chang2014kitaev}, it was proved that the Levin-Wen model associated to a fusion category $\mathcal{C}$ is equivalent to the generalized Kitaev model based on the weak Hopf algebra $H_{\mathcal{C}}$ reconstructed from $\mathcal{C}$ such that $\text{Rep}(H_{\mathcal{C}}) \simeq \mathcal{C}$. Thus, the Levin-Wen models and the generalized Kitaev models are essentially equivalent.

It is an interesting question whether or not our current proof for the case of finite groups can be adapted to the case of Hopf algebras and/or to weak Hopf algebras. For finite groups, there are well-defined notions of local gauge transformations and holonomy which allow us to obtain an explicit characterization of the ground states, though this is not necessary for the proof of our main result. In the general case, such notions are not as clear. We leave these questions for future study.

\section{Background}


In this section, we give a minimal review of a few preliminary notions which are necessary for understanding the proof of our main theorem. We begin by discussing generalities related to error correcting codes, topological quantum order, and the relationship between them, and then describe the particular models which we will be studying. 

\subsection{Error correcting codes}
We provide a very brief introduction to quantum error correcting codes (QECCs), mainly to set up the conventions that will be used later. For a detailed account of the theory of QECCs, we recommend \cite{nielsonchuang}. 

To protect quantum information against noise, a common strategy is to embed states $|\psi\rangle$ which contain information into a subspace $\mathcal{C}$, called the code subspace, of a larger Hilbert space $\mathcal{H}$. Quantum processing of the state is then modeled as a noisy quantum channel $\mathcal{E}$, which is a completely positive, trace preserving map on the density matrices living in $\mathcal{H}$. It is possible to successfully retrieve the information contained in $|\psi\rangle$ if there is another recovery quantum channel $\mathcal{R}$ such that
\begin{equation}
    (\mathcal{R} \circ \mathcal{E})(|\psi\rangle \langle \psi|) = |\psi\rangle \langle \psi | \text{ for any } |\psi\rangle \in \mathcal{C}.
\end{equation}
The recovery only needs to be perfect for states in the code subspace, and the larger Hilbert space acts as a resource of redundancy that makes the recovery possible. 

Any quantum channel $\mathcal{E}$ can be written as the composition of an isometry $V: \mathcal{H} \to \mathcal{H} \otimes \mathcal{H}_E$ together with a partial trace over the `ancilla' degrees of freedom $\mathcal{H}_E$ as
\begin{align}
\mathcal{E} (\rho) = \Tr_E ( V \rho V^\dagger).
\end{align}
This representation is unique up to isomorphisms of $\mathcal{H}_E$. If we choose some computational basis $\{\ket{i} \in \mathcal{H}_E\}$ for the ancilla system and make the partial trace explicit, we obtain an `operator-sum representation' (or Kraus decomposition) for the quantum channel $\mathcal{E}$, given by
\begin{equation}
    \mathcal{E}(\rho) = \sum_i E_i \rho E_i^\dagger,
\end{equation}
where the \emph{operation elements} $E_i \in \End{\mathcal(\mathcal{H})}$ are defined by $E_i = \bra{i} V$. The isometry condition $V^\dagger V = I$ becomes $$\sum_i E_i^\dagger E_i = I.$$ 
For a noisy quantum channel, the $E_i$ can be thought of as the operators that create errors. A general theorem concerning the existence of recovery channels can be found in 10.3 of \cite{nielsonchuang}, which we reproduce below:
\begin{theorem}\label{kl_theorem}
Let $P \in \End(\mathcal{H})$ be the projection onto the code subspace $\mathcal{C}$, and $\mathcal{E}$ a quantum channel with operation elements $\{E_i\}$. A necessary and sufficient condition for the existence of a recovery channel $\mathcal{R}$ correcting $\mathcal{E}$ on $\mathcal{C}$ is that
\begin{equation}
    P E_i^\dagger E_j P = \alpha_{i j} P, \label{eq:error_correcting}
\end{equation}
for some Hermitian matrix $\alpha$ of complex numbers. 
\end{theorem}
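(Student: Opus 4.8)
The plan is to prove both directions of this Knill--Laflamme condition from the ``unitary freedom'' in the operator-sum representation (Theorem~8.2 of~\cite{nielsonchuang}) together with the spectral decomposition of $\alpha$, following essentially \cite[\S10.3]{nielsonchuang}: the necessity direction extracts the condition from the hypothetical recovery channel, and the sufficiency direction builds an explicit recovery channel out of $\alpha$.

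For necessity, I would begin from the observation that if a recovery channel $\mathcal{R}$ with operation elements $\{R_k\}$ exists, then $\mathcal{R}\circ\mathcal{E}$ has operation elements $\{R_kE_i\}$ and, by hypothesis, fixes every pure state $\ket{\psi}\bra{\psi}$ with $\ket{\psi}\in\mathcal{C}$. The identity $\sum_{k,i}(R_kE_i\ket{\psi})(R_kE_i\ket{\psi})^\dagger=\ket{\psi}\bra{\psi}$ then exhibits a rank-one projector as a sum of positive operators, which forces each $R_kE_i\ket{\psi}$ to be a scalar multiple of $\ket{\psi}$; as this holds for all $\ket{\psi}\in\mathcal{C}$, a short argument (every vector of $\mathcal{C}$ is an eigenvector of $R_kE_i$) yields $R_kE_iP=c_{ki}P$ for scalars $c_{ki}$. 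Inserting $I=\sum_kR_k^\dagger R_k$ between $E_i^\dagger$ and $E_j$ then gives $PE_i^\dagger E_jP=\big(\sum_k\overline{c_{ki}}\,c_{kj}\big)P$, so \eqref{eq:error_correcting} holds with $\alpha$ a Gram matrix --- in particular positive semidefinite, hence Hermitian.

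For sufficiency, assume \eqref{eq:error_correcting}. One first records that $\alpha$ is positive semidefinite --- for any vector $v$, $\big(\sum_{ij}\overline{v_i}\,v_j\,\alpha_{ij}\big)P=PA^\dagger AP\succeq 0$ with $A=\sum_iv_iE_i$ --- and that $\Tr\alpha=1$, since $\sum_i\alpha_{ii}P=P\big(\sum_iE_i^\dagger E_i\big)P=P$. Because $\alpha$ is Hermitian, a unitary change of operation elements $F_k=\sum_iu_{ki}E_i$ --- which does not change $\mathcal{E}$ --- can be chosen to diagonalize it, so that $PF_k^\dagger F_lP=d_k\delta_{kl}P$ with $d_k\ge 0$ and $\sum_kd_k=1$. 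For each $k$ with $d_k>0$, the polar decomposition $F_kP=\sqrt{d_k}\,V_k$ produces a partial isometry $V_k$ with initial space $\mathcal{C}$ and final space $\mathcal{C}_k:=V_k\mathcal{C}$, and the relation $PF_k^\dagger F_lP=d_k\delta_{kl}P$ is precisely the statement that $V_k^\dagger V_l=\delta_{kl}P$, i.e. that the subspaces $\mathcal{C}_k$ are mutually orthogonal. I would then take $\mathcal{R}$ to have operation elements $\{V_k^\dagger\}$, supplemented by $R_0=\sqrt{I-\sum_kV_kV_k^\dagger}$ so that $\mathcal{R}$ is trace preserving. A short computation using $\mathcal{E}(\rho)=\sum_ld_lV_l\rho V_l^\dagger$, $V_k^\dagger V_l=\delta_{kl}P$, and $R_0V_l=0$ then yields $(\mathcal{R}\circ\mathcal{E})(\rho)=\big(\sum_kd_k\big)P\rho P=\rho$ for every $\rho$ supported on $\mathcal{C}$.

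The step deserving the most care --- and the conceptual heart of the theorem --- is the rigidity input common to both directions: that the operation elements of the identity channel on $\mathcal{C}$ are forced to be scalar multiples of $P$, and, dually, that any two operator-sum representations of one channel are related by an isometry on the index set. These are two faces of the unitary-freedom lemma for Kraus representations; once it is granted, the remaining steps (diagonalizing $\alpha$, the polar decompositions, harmlessly discarding indices with $d_k=0$, and padding $\mathcal{R}$ with $R_0$) are routine finite-dimensional linear algebra.
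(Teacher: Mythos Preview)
The paper does not give its own proof of this theorem: it simply reproduces the statement from \cite[\S10.3]{nielsonchuang} and refers the reader there. Your proposal is a faithful and correct sketch of precisely that Nielsen--Chuang argument (unitary freedom of Kraus representations, diagonalization of $\alpha$, polar decomposition, orthogonality of the rotated code subspaces), so it agrees with the source the paper cites.
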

In later sections, we will prove that the ground state space of Kitaev's quantum double model is a quantum error correcting code by showing that \eqref{eq:error_correcting} holds. 
\subsection{Topological quantum order}\label{subsec:tqo}
We now review the definition of topological quantum order ($\TQO$) introduced in \cite{bravyi2010topological}.
Let $\Lambda = (V(\Lambda),E(\Lambda),F(\Lambda))$ be an $L\times L$ lattice with periodic boundary conditions. The requirement that the lattice has periodic boundary is purely for the sake of simplicity. In general, one can take any lattice of linear size $L$ that lives on a surface of arbitrary genus.
In Kitaev's quantum double model, the qudits are conventionally defined to live on the edges of $\Lambda$ instead of the vertices; for simplicity we use the same convention here.\footnote{The choice of whether the qudits live on the edges or vertices of the lattice is arbitrary and makes no difference to the definition.} We therefore associate to each edge $e \in E(\Lambda)$ a qudit $\mathcal{H}_e=\C^d,$ and take the total Hilbert space to be $\mathcal{H} = \bigotimes_{e \in E(\Lambda)} \mathcal{H}_e$. We consider Hamiltonians of the form
\begin{equation}
    H = \sum\limits_{v \in V(\Lambda)} (1-P_{v}) + \sum\limits_{f \in F(\Lambda)} (1-P_f),
\end{equation}
where $P_v$ is a projector that acts non-trivially only on edges which meet the vertex $v$, and $P_f$ is a projector that acts non-trivially only on the boundary edges of the plaquette $f$. We further demand that the $P_v$'s and the $P_f$'s mutually commute and that the Hamiltonian be frustration free, i.e.\ that the ground states of $H$ are stabilized by each $P_v$ and each $P_f$:
\begin{equation}
    V_{\mathrm{g.s.}} = \{\ket{\psi} \in \mathcal{H}: P_v \ket{\psi} = \ket{\psi} \text{ and } P_f \ket{\psi} = \ket{\psi}, \ \forall \ v \in V(\Lambda), f \in F(\Lambda) \}.
\end{equation}
Denote the projection onto $V_{\mathrm{g.s.}}$ by $P$, which can be written as
\begin{equation}
    P = \prod\limits_{v \in V(\Lambda)}P_v \prod\limits_{f \in F(\Lambda)} P_f.
\end{equation}
Let $A$ be a sublattice of $\Lambda$ of size $\ell \times \ell$, denote by $\interior{V(A)}$ the subset of $V(A)$ that are in the interior of $A$ (which is of size $(\ell - 2) \times (\ell - 2)$), and define 
\begin{equation}
    P_{A} = \prod\limits_{v \in \interior{V(A)}}P_v \prod\limits_{f \in F(A)} P_f.
\end{equation}
We can now state the definition of TQO that we will use. 

\begin{definition}[Topological Quantum Order \cite{bravyi2010topological}]

A Hamiltonian which is frustration-free is said to have topological quantum order (TQO) if there is a constant $\alpha > 0$ such that for any $\ell \times \ell$ sublattice $A$ with $\ell \leq L^{\alpha}$, the following hold.
\begin{itemize}
    \item $\TQO$-1:  For any operator $O$ acting on $A$, 
\begin{equation}
    P O P = c_O P,
\end{equation}
where $c_O$ is some complex number.
\item $\TQO$-2: If $B$ is the smallest square lattice whose interior properly contains $A$,\footnote{So $B$ has size $(\ell +2) \times (\ell +2)$.} then $\Tr_{\bar{A}}(P)$ and $\Tr_{\bar{A}}(P_{B})$ have the same kernel, where $\bar{A}$ is the complement of $A$ in $\Lambda$.
\end{itemize}
\end{definition}

TQO-1 heuristically corresponds to the statement that a sufficiently local operator cannot be used to distinguish between two orthogonal ground states because they differ only in their global, ``topological'' properties. Furthermore, ground state denegeracy is ``topologically protected'' in systems satisfying TQO-1 in the sense that perturbations by local operators can induce energy level splitting only non-perturbatively, or at some large order in perturbation theory which increases with the size of the lattice. It is straightforward to show that $\TQO$-1 is equivalent to the condition that all normalized ground states $\ket{\psi} \in V_{\mathrm{g.s.}}$ have the same reduced density matrix on $A$.

TQO-2 is the statement that the local ground state spaces and the global one should agree. We emphasize that TQO-2 can be violated at regions with non-trivial topologies, which is why one restricts to square lattices. 

\begin{remark}
For our purposes, TQO-1 and QECC are morally interchangeable. Indeed, if $H$ is any Hamiltonian\footnote{For defining TQO-1, we do not need that $H$ is frustration free.} with $P$ the projection onto the ground space $V_{\mathrm{g.s.}}$, then the following are equivalent.
\begin{enumerate}
    \item The Hamiltonian $H$ has TQO-1.
    \item The Hamiltonian $H$ provides a QECC with code subspace $V_{\mathrm{g.s.}}$. There exists an $\alpha>0$ such that the code can correct any error $\rho \mapsto \sum_i E_i\rho E_i^\dag$ for which every combination $E_i^\dag E_j$ is supported on an $\ell\times \ell$ sublattice $A$ with $\ell\leq L^\alpha$.  
\end{enumerate}
\end{remark}

In \S\ref{subsec:indistinguishable}, we will prove a theorem for Kitaev's finite group models which simultaneously implies TQO-1 and TQO-2, and so by the above remark also implies that the model furnishes a QECC.

\subsection{Kitaev's finite group lattice model}
We now turn to Kitaev's finite group lattice models \cite{kitaev2003fault}, which we will see instantiate the concepts of the previous sections. Let $G$ be a finite group, $\Sigma$ be an oriented 2D surface with no boundary, and $\Lambda = (V, E, F)$\footnote{We abbreviate $V\equiv V(\Lambda)$, $E\equiv E(\Lambda)$, and $F\equiv F(\Lambda)$.} be an arbitrary oriented lattice on $\Sigma$, where $V$, $E$, and $F$ are the sets of vertices, oriented edges, and plaquettes of the lattice, respectively. Then, for every $e \in E$, set $\mathcal{H}_e = \mathbb{C}[G]$ the group algebra of $G$, i.e.\ $\mathcal{H}_e$ is spanned by the basis $\left\{ \ket{g} : g \in G \right\}$. The overall Hilbert space is given by $\mathcal{H} \equiv \bigotimes_{e \in E} \mathcal{H}_e$. A natural basis for this Hilbert space consists of tensor products of the form $\ket{g}\equiv \bigotimes_{e\in E}|g_e\rangle$; we refer to this as the \textit{group basis}.

We define the \emph{sites} of $\Lambda$ to be the set of pairs $s=(v,p)\in V\times F$ such that $p$ is adjacent to $v$. Given a site $s = (v,p)$ and two elements $g, h$ in $G$, we define two sets of operators: gauge transformations $A_v(g)$ and magnetic operators $B_{(v,p)}(h)$. Their action is most readily seen in the group basis. For example, $A_v(g)$ acts on the edges which touch $v$ by multiplication by $g$ on the left, or multiplication by $g^{-1}$ on the right, depending on whether the edge is oriented away from or towards $v$. The magnetic operator $B_{(v,p)}(h)$ computes the product of the group elements sitting on the edges of $p$, and compares it to $h$, annihilating the state if there is a discrepancy, while stabilizing it if the group elements agree. The prescription for computing the product is to start at $v$ and move around $p$ counter-clockwise, inverting the group element associated to an edge if that edge is oriented opposite relative to the direction of travel. For example,
\begin{align}
\begin{split}
  A_v(g) \Bigg\vert
  \begin{tikzpicture}[baseline={([yshift=-.5ex]current bounding box.center)},vertex/.style={anchor=base,circle,fill=black!25,minimum size=18pt,inner sep=2pt}]
    \draw[fill] (0,0) circle [radius=0.025] node[above right] {$v$};
    \draw (-1,0) -- (1,0);
    \draw (0,-1) -- (0,1); 
    \draw[->] (0,0) -- (-0.5,0) node[above] { $g_1$};
    \draw[->] (0,0) -- (0,-0.5) node[left] { $g_2$};    
    \draw[->] (0,0) -- (0.5,0) node[below] { $g_3$};
    \draw[->] (0,1) -- (0,0.5) node[above right] { $g_4$};
  \end{tikzpicture}
\Bigg\rangle & \equiv 
  \Bigg\vert
  \begin{tikzpicture}[baseline={([yshift=-.5ex]current bounding box.center)},vertex/.style={anchor=base,circle,fill=black!25,minimum size=18pt,inner sep=2pt}]
    \draw[fill] (0,0) circle [radius=0.025] node[above right] {$v$};
    \draw (-1,0) -- (1,0);
    \draw (0,-1) -- (0,1);
    \draw[->] (0,0) -- (-0.5,0) node[above] {$gg_1$};
    \draw[->] (0,0) -- (0,-0.5) node[left] {$gg_2$};    
    \draw[->] (0,0) -- (0.5,0) node[below] {$gg_3$};
    \draw[->] (0,1) -- (0,0.5) node[above right] {$g_4g^{-1}$};
  \end{tikzpicture}
  \Bigg\rangle \\ 
  B_{(v,p)}(h) \Bigg \vert
  \begin{tikzpicture}[baseline={([yshift=-.5ex]current bounding box.center)},vertex/.style={anchor=base,circle,fill=black!25,minimum size=18pt,inner sep=2pt}]
    \draw[fill] (0,0) circle [radius=0.025] node[below left] {$v$};
    \draw (0,0) -- (1,0) -- (1,1) -- (0,1) -- (0,0); 
    \draw[->] (0,0)--(0.5,0) node[below] {$h_1$};
    \draw[->] (1,1)--(1,0.5) node[right] {$h_2$};
    \draw[->] (1,1)--(0.5,1) node[above] {$h_3$};
    \draw[->] (0,1)--(0,0.5) node[left] {$h_4$};
    \node at (0.5,0.5) {$p$};
  \end{tikzpicture}
  \Bigg \rangle & \equiv
  \delta_{h, h_1 h_2^{-1} h_3 h_4} 
  \Bigg \vert
  \begin{tikzpicture}[baseline={([yshift=-.5ex]current bounding box.center)},vertex/.style={anchor=base,circle,fill=black!25,minimum size=18pt,inner sep=2pt}]
    \draw[fill] (0,0) circle [radius=0.025] node[below left] {$v$};
    \draw (0,0) -- (1,0) -- (1,1) -- (0,1) -- (0,0); 
    \draw[->] (0,0)--(0.5,0) node[below] {$h_1$};
    \draw[->] (1,1)--(1,0.5) node[right] {$h_2$};
    \draw[->] (1,1)--(0.5,1) node[above] {$h_3$};
    \draw[->] (0,1)--(0,0.5) node[left] {$h_4$};
    \node at (0.5,0.5) {$p$};
  \end{tikzpicture}
  \Bigg \rangle
  \end{split}
\end{align}
where $\delta_{g,h}$ is the Kronecker delta symbol. Some basic facts follow:
\begin{align}\label{operator_properties}
\begin{split}
  A_v(g) A_v(h) & = A_v(gh)\\
  B_{(v,p)}(g) B_{(v,p)}(h) & = \delta_{g, h} B_{(v,p)}(h)\\
  A_v(g) B_{(v,p)}(h) & = B_{(v,p)}(g h g^{-1}) A_v(g).
  \end{split}
\end{align}
We can now define the vertex and plaquette operators as
\begin{align}
\begin{split}
  A_v & \equiv \frac{1}{\abs{G}} \sum_{g \in G}^{} A_v(g) \\
  B_p & \equiv B_{(v,p)}(\mathds{1}),
  \end{split}
\end{align}
where $v$ is any vertex adjacent to $p$ and $\mathds{1} \in G$ is the identity element.\footnote{When $g$ is the identity element, the definition of $B_{(v,p)}(g)$ depends only on the plaquette in $s = (v,p)$, not the vertex.} It is easily verified that for all $v \in V$, $p \in F$, $A_v$ and $B_p$ are commuting projectors. The Hamiltonian of this system is defined in terms of these projectors:
\begin{align}
  H = \sum_{v \in V}^{} (1 - A_v) + \sum_{p \in F}^{} (1 - B_p).
\end{align}
This Hamiltonian is frustration-free and the ground space is simply given by
\begin{align}
  V_{\text{g.s.}} \equiv \left\{ \ket{\psi} \in \mathcal{H} :  A_v \ket{\psi} = B_p \ket{\psi} = \ket{\psi}, \  \forall v \in V, \  p \in F \right\}.
\end{align}
In gauge-theoretic language, where we think of a state as specifying the field configuration of a $G$ vector potential, the condition that $A_v|\psi\rangle = |\psi\rangle$ means that $|\psi\rangle$ is gauge invariant, while $B_{(v,p)}(h)|\psi\rangle = |\psi\rangle$ means that the connection is flat. Now, due to the identities
\begin{align}
\begin{split}
  A_v(g) A_v & = A_v\\
  B_{(v,p)}(h) B_p & = \delta_{h,\mathds{1}} B_p,
  \end{split}
\end{align}
the action of the $A_v(g)$ and $B_{(v,p)}(h)$ operators on the ground space is simply 
\begin{align}
\begin{split}
  A_v(g) \ket{\psi} & = \ket{\psi}\\
  B_{(v,p)}(h) \ket{\psi} & = \delta_{h,\mathds{1}} \ket{\psi}
  \end{split}
\end{align}
for all $\ket{\psi} \in V_\text{g.s.}$. In Section \ref{subsec:gs_space}, we show that the dimension of $V_\text{g.s.}$ is the number of orbits of $\Hom(\pi_1(\Sigma), G)$ under the action of $G$ by conjugation, where $\pi_1(\Sigma)$ is the fundamental group of $\Sigma$.

We recall that the toric code is the ground space of the above Hamiltonian for $\Sigma = T^2$ the two-torus, $\Lambda$ an $L \times L$ periodic square lattice, and $G = \Z_2$. In this case, the orientations of the edges in $E$ does not matter and we can identify $\C[G]$ with a qubit, with the two elements $0,1$ of $\Z_2$ corresponding to $\ket{0}, \ket{1}$ of the computational basis. It is easy to check that 
\begin{align}
  A_v  = \frac{1+X_v}{2}, \, B_p  = \frac{1+Z_p}{2},
\end{align}
where $X_v$ is the tensor product of Pauli $X$ operators on all the Hilbert spaces in the edges incident to $v$, and $Z_p$ is the tensor product of Pauli $Z$ operators on the edges on the boundary of $p$. The ground space is spanned by states corresponding to homology classes of loops on a torus. This is a consequence of the explicit characterization of the ground space  corresponding to any finite group $G$ in the next section.

\subsection{Ground state space of Kitaev's model}
\label{subsec:gs_space}
In this subsection, we discuss some properties of the ground state and count the ground state degeneracy. This result is stated in the original paper without a proof \cite{kitaev2003fault} and is known to experts in relevant areas. See also \cite{bachmann2017local} for the case of cyclic groups. However, we did not find a reference that addresses the general case explicitly. Therefore, we think it is beneficial to the readers to provide a detailed and elementary derivation. We follow the notations from the previous subsection.
There is an action of $G$ on $\Hom(\pi_1(\Sigma), G)$ by conjugation: for $g \in G$ and $ \phi \in \Hom(\pi_1(\Sigma), G)$, we set $(g\cdot \phi)(.) \equiv g \phi(.) g^{-1}$.

\begin{theorem}
The dimension of $V_{\mathrm{g.s.}} (\Sigma)$ is equal  to the number of orbits in $\Hom(\pi_1(\Sigma), G)$ under the $G$-action.
\end{theorem}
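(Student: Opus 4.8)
The plan is to compute $\dim V_{\mathrm{g.s.}}(\Sigma) = \Tr_{\mathcal{H}}(P)$ by exploiting the factorization $P = \prod_{v\in V} A_v \prod_{p\in F} B_p$ into mutually commuting projectors. First I would observe that $\prod_{p\in F} B_p$ is diagonal in the group basis: it annihilates $\ket{g} = \bigotimes_e \ket{g_e}$ unless the holonomy of $g$ around every plaquette is trivial, and fixes $\ket{g}$ otherwise. Call such configurations \emph{flat connections} and let $\mathcal{H}_{\mathrm{flat}}\subseteq\mathcal{H}$ be their span. Since each $A_v$ commutes with each $B_p$ and moreover carries flat connections to flat connections (it conjugates each plaquette holonomy, using the third identity in \eqref{operator_properties}), the projector $A \equiv \prod_{v\in V}A_v$ preserves $\mathcal{H}_{\mathrm{flat}}$, its image there is exactly $V_{\mathrm{g.s.}}$, and hence $\dim V_{\mathrm{g.s.}} = \Tr_{\mathcal{H}_{\mathrm{flat}}}(A)$.

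Next I would unpack $A = |G|^{-|V|}\sum_{(g_v)\in G^V}\prod_{v}A_v(g_v)$. Each summand acts on the basis of $\mathcal{H}_{\mathrm{flat}}$ as a permutation, namely the gauge transformation by $(g_v)$ on flat connections; so $A$ is precisely the operator that averages the permutation representation of the gauge group $\mathcal{G} \equiv G^V$ over $\mathcal{G}$. By Burnside's lemma, $\Tr_{\mathcal{H}_{\mathrm{flat}}}(A) = \frac{1}{|\mathcal{G}|}\sum_{\gamma\in\mathcal{G}}\abs{\mathrm{Fix}(\gamma)}$ equals the number of $\mathcal{G}$-orbits on the set of flat connections, so it suffices to show that this orbit set is in bijection with $\Hom(\pi_1(\Sigma),G)/G$.

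For the final step, assume $\Sigma$ is connected (otherwise work component by component), fix a basepoint vertex $v_0$, and choose a spanning tree $T$ of the $1$-skeleton $(V,E)$. Given any flat connection there is a unique gauge transformation with $g_{v_0}=\mathds{1}$ making it trivial on every edge of $T$; a flat connection trivial on $T$ is then constrained only by the plaquette relations, which form a presentation of $\pi_1(\Sigma,v_0)$ (the non-tree edges give a free generating set, and the $2$-cells of $\Lambda$ recover $\Sigma$ up to homotopy). This identifies tree-gauge-fixed flat connections with $\Hom(\pi_1(\Sigma,v_0),G)$. The only residual gauge freedom preserving this slice consists of the spatially constant transformations $g_v\equiv g$, which act by conjugation on the holonomies; since every $\mathcal{G}$-orbit meets the slice, the orbit space is $\Hom(\pi_1(\Sigma),G)/G$, which completes the count. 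The main obstacle is this last bijection — verifying carefully, with all edge orientations and the counter-clockwise plaquette convention tracked, that the plaquette flatness relations genuinely present $\pi_1(\Sigma)$ (equivalently, that flat lattice $G$-connections modulo gauge coincide with $G$-local systems on $\Sigma$). This is standard algebraic topology via the CW structure and van Kampen's theorem, but it is where the bookkeeping lives.
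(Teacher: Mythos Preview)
Your argument is correct and lands on the same core identification as the paper: $\dim V_{\mathrm{g.s.}}$ equals the number of gauge-equivalence classes of flat connections, which via a spanning tree is $\abs{\Hom(\pi_1(\Sigma),G)/G}$. The difference is packaging. You reach the orbit count by computing $\Tr_{\mathcal{H}_{\mathrm{flat}}}(A)$ and invoking Burnside (equivalently, that the invariants of a permutation module have dimension equal to the number of orbits); the paper instead writes down the orbit-sums $|[g]\rangle$ explicitly and checks directly that they form a basis of $V_{\mathrm{g.s.}}$. For the bijection with $\Hom(\pi_1(\Sigma),G)/G$, the paper builds the map $\Phi\colon S\to\Hom(\pi_1(\Sigma),G)$ by hand and verifies it is onto and $|G|^{|V|-1}$-to-$1$, rather than citing van Kampen --- this is precisely the bookkeeping you flag as the main obstacle, and the paper spells it out elementarily. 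Your Burnside framing is slightly slicker for the bare dimension count; the paper's explicit construction has the bonus of actually naming the ground states.
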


\begin{proof}
A basis element $|g\rangle = \bigotimes_{e\in E}|g_e\rangle$ of the total Hilbert space is an assignment of a group element $g_{e}$ to each edge $e\in E$. Let $\gamma$ be any oriented path in the lattice, which can be thought of as a sequence of connected edges. The group element obtained by multiplying the group elements along the path is denoted by $g_{\gamma}$. If one edge is oriented opposite to the path, then we multiply the inverse of the group element of that edge. 

The constraint $B_p |g\rangle = |g\rangle$ is equivalent to the condition that $g_{\partial p} = \mathds{1}$, where $\partial p$ is the boundary of $p$ oriented counterclockwise, thought of as a path.\footnote{For testing whether or not $g_{\partial p}=\mathds{1}$, it does not matter which vertex we think of $\partial p$ as starting at.} Hence, the subspace fixed by all the $B_p$'s is spanned by the following set:
\begin{align}
\begin{split}
    S &= \{ |g\rangle :\  g_{\partial p} = \mathds{1}, \ \forall p\in F\} \\
    &= \{|g\rangle: \  g_{\gamma} = \mathds{1}, \ \text{for any \emph{contractible}, closed } \gamma\}
    \end{split}
\end{align}

For any $h\in G$, we call the operator $A_v(h)$ a gauge transformation at the vertex $v$. For two basis elements $|g\rangle$, $|g'\rangle \in S$, we call $|g\rangle$ and $|g'\rangle$ gauge equivalent if $|g'\rangle$ can be obtained from $|g\rangle$ by applying some gauge transformations at several vertices, denoted by $|g\rangle \sim |g'\rangle$. Gauge equivalence defines an equivalence relation on $S$. We denote the set of equivalence classes by $[S]$.

For each $[g] \in [S]$, define

\begin{equation}
    |[g]\rangle := \sum_{|g\rangle \sim |g'\rangle} |g'\rangle
\end{equation}
Since 
\begin{equation}
    A_v(h) |[g]\rangle = \sum_{|g\rangle \sim |g'\rangle} A_v(h) |g'\rangle = \sum_{|g\rangle \sim |g''\rangle} |g''\rangle = |[g]\rangle,
\end{equation}
this implies that $|[g]\rangle$ is stabilized by $A_v$,
\begin{equation}
    A_v |[g]\rangle = \frac{1}{|G|} \sum_{h \in G} A_v(h) |[g]\rangle = |[g]\rangle.
\end{equation}
We conclude that $|[g]\rangle \in V_{\mathrm{g.s.}}(\Sigma)$. It is direct to check that $\{ |[g]\rangle: \ [g] \in [S] \}$ forms a basis of $V_{\mathrm{g.s.}}(\Sigma)$.

We now build a correspondence between $[S]$ and orbits in $\Hom(\pi_1(\Sigma), G)$. Choose any vertex $v_0$ as a base point of $\Lambda$ and choose a maximal spanning tree $T$ containing $v_0$. By definition, a maximal spanning tree is a maximal subgraph of the lattice $\Lambda$ that does not contain any loops. Hence, any maximal spanning tree contains exactly $m := |V| - 1$ edges.

We define a map 
\begin{equation}
    \Phi: S \longrightarrow \Hom(\pi_1(\Sigma), G)
\end{equation}
as follows. Let $\gamma$ be any closed path starting and ending at $v_0$. For any $|g\rangle \in S$, define $\Phi(|g\rangle) ([\gamma]) := g_{\gamma} $. Namely, $\Phi(|g\rangle) $ maps a closed path $\gamma$ to the product of the group elements on it. The fact that $g_{\gamma_0} = \mathds{1}$ for any contractible loop $\gamma_0$ implies that $\Phi(|g\rangle) ([\gamma]) $ only depends on the homotopy class of $\gamma$. Hence, $\Phi(|g\rangle) $ is a well defined map from $\pi_1(\Sigma, v_0)$ to $G$.\footnote{As is standard in algebraic topology, the choice of basepoint is immaterial in defining the fundamental group up to isomorphism, so we suppress it from the notation from now on.} It is clear that it is also a group homomorphism, so
\begin{equation}
    \Phi(|g\rangle) \in \Hom(\pi_1(\Sigma), G)
\end{equation}
Now we show that $\Phi$ is onto and in fact $|G|^{m}$-to-1. 

Given any $\phi \in \Hom(\pi_1(\Sigma), G)$, we construct a preimage $|g\rangle$ of $\phi$ as follows. The idea is that the group elements on the edges of the maximal spanning tree $T$ are arbitrary, but the group elements on the rest of the edges are completely determined in terms of these and $\phi$. For any edge $e$ not in $T$, let $\partial_0 e$ and $\partial_1 e$ be the two end vertices of $e$. By construction, there is a unique path $\gamma_i$ in $T$ connecting $v_0$ to $\partial_i e$, where $i = 0$, $1$. Let $\Bar{\gamma}_1$ be the path $\gamma_1$ with reversed direction, then $\gamma = \gamma_0 e \Bar{\gamma}_1$ is a closed path. An intuitive picture is that $\gamma$ reaches $\partial_0 e$ along $\gamma_0$ from $v_0$, travels through the edge $e$, and then goes back to $v_0$ along $\Bar{\gamma}_1$. There exists a unique group element $g_{e}$ such that 

\begin{equation}
    g_{\gamma_0} g_{e} g_{\Bar{\gamma}_1} = \phi(\gamma)
\end{equation}

It can be checked that $|g\rangle \in S$ and $\Phi(|g\rangle) = \phi$. Since we have $|G|^{m}$ choices of group elements to put on the spanning tree $T$ when defining $|g\rangle$, the map $\Phi$ is $|G|^{m}$-to-1.

On the other hand, for each given $|g\rangle$, if we are only allowed to apply gauge transformations on $|g\rangle$ at vertices other than $v_0$, there are in total $|G|^{m}$ such transformations. These transformations are all different from each other acting of a fixed $|g\rangle$.
If two basis elements $|g\rangle$ and $|g'\rangle$ are related by gauge transformations at vertices other than $v_0$, then $\Phi(|g\rangle) = \Phi(|g'\rangle)$. We conclude that the preimage of $\phi$ contains precisely those $|g\rangle$'s that are related by gauge transformations at vertices other than $v_0$. If we perform a gauge transformation $A_{v_0}(h)$ at $v_0$ to $|g\rangle$, then it is obvious that $\Phi(A_{v_0}(h)|g\rangle) = h \Phi(|g\rangle)h^{-1}$. Thus we have a one-to-one correspondence between gauge classes in $S$ and orbits in $\Hom(\pi_1(\Sigma), G)$.

\end{proof}

\section{Main results} \label{sec:main_result}
We now move on to the statement of our main theorem, which implies both TQO-1 and TQO-2.

\begin{theorem}
\label{thm:main}
Let $H$ be the Hamiltonian of  Kitaev's lattice model associated to any finite group $G$, closed surface $\Sigma$, and lattice $\Lambda$ on $\Sigma$. Let $A \subset B \subset \Lambda$ be two rectangular sublattices contained in contractible subregions such that $V(A) \subset \interior{V(B)}$, and denote 
\begin{align}\mathcal{H}_B = \{\ket{\psi} \in \mathcal{H}: A_v \ket{\psi} = B_p \ket{\psi} = \ket{\psi}, \forall v \in \interior{V(B)}, p \in F(B)\}.
\end{align}
Then all states in $\mathcal{H}_B$ have the same reduced state on $A$, i.e.\ there exists a density matrix $\rho_A$ on $A$ such that
\begin{equation}
    \Tr_{\bar{A}} \ket{\psi}\bra{\psi} = \rho_A,
\end{equation}
for all $\ket{\psi} \in \mathcal{H}_B$ such that $\langle \psi|\psi\rangle = 1$.
\end{theorem}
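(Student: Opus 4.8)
\emph{Strategy.} The plan is to compute $\Tr_{\bar A}\ket\psi\bra\psi$ explicitly in a convenient basis of $\mathcal H_B$ and to read off that it does not depend on $\ket\psi$. Since the $B_p$ with $p\in F(B)$ are diagonal in the group basis and the operators $\{B_p\}_{p\in F(B)}\cup\{A_v\}_{v\in\interior{V(B)}}$ all commute, the same reasoning as in Section~\ref{subsec:gs_space} produces an orthogonal basis $\{\ket{[g]}\}$ of $\mathcal H_B$: here $g$ ranges over group-basis configurations that are \emph{flat on $B$} (i.e.\ $g_{\partial p}=\mathds 1$ for every $p\in F(B)$), modulo the equivalence $\sim$ generated by gauge transformations $A_v(h)$ at interior vertices $v\in\interior{V(B)}$, and $\ket{[g]}=\sum_{g'\sim g}\ket{g'}$. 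Writing $\ket\psi=\sum_{[g]}c_{[g]}\ket{[g]}$, the matrix $\Tr_{\bar A}\ket\psi\bra\psi$ becomes a sum of terms $\Tr_{\bar A}\big(\ket{[g]}\bra{[g']}\big)$, each of which is an honest sum of matrix units $\ket{a|_A}\bra{b|_A}$ over pairs of configurations $a\sim g$, $b\sim g'$ that agree on every edge of $\bar A$.

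\emph{Key lemma.} The crux is the following claim, proved in the same style as the argument of Section~\ref{subsec:gs_space}: \emph{if $a$ and $b$ are both flat on $B$ and agree on every edge of $\bar A$, then $a$ and $b$ differ by a gauge transformation supported on the vertex set $V(A)$.} Because $B$ is contractible, a configuration flat on $B$ is a ``coboundary'': fixing a basepoint $v_0\in V(B)\setminus V(A)$, parallel transport along paths in $B$ defines a potential $\lambda\colon V(B)\to G$ from which $a$ is recovered, and likewise $\lambda'$ for $b$. The hypothesis $V(A)\subset\interior{V(B)}$ guarantees that $V(B)\setminus V(A)$ is connected through edges of $B$ lying in $\bar A$, so $\lambda$ and $\lambda'$ --- computed by transporting along such edges --- agree on all of $V(B)\setminus V(A)$; hence the pointwise ``quotient'' $\lambda'\lambda^{-1}$ is a gauge transformation supported on $V(A)\subseteq\interior{V(B)}$ carrying $a$ to $b$. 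I expect this lemma, and in particular the interplay between contractibility of $B$ and the interior condition on $A$, to be the main obstacle; the remainder is bookkeeping.

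\emph{Conclusion.} Two consequences finish the proof. First, the lemma shows that if $a\sim g$ and $b\sim g'$ agree on $\bar A$ then $g\sim g'$; hence for $[g]\neq[g']$ the operator $\Tr_{\bar A}(\ket{[g]}\bra{[g']})$ vanishes, and $\Tr_{\bar A}\ket\psi\bra\psi=\sum_{[g]}|c_{[g]}|^2\,\Tr_{\bar A}(\ket{[g]}\bra{[g]})$. Second, refining the lemma one checks that within a fixed class $[g]$ the pairs $(a,b)$ with $a\sim g$, $b\sim g$ agreeing on $\bar A$ are exactly $(a,\sigma\cdot a)$ with $\sigma$ a gauge transformation supported on $\interior{V(A)}$ (the extra constraint forces $\sigma$ to be trivial on the boundary layer of $V(A)$). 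A counting step --- all $\sim$-orbits of flat-on-$B$ configurations have the same size because their stabilizers under interior gauge transformations are trivial, and the restriction map $a\mapsto a|_A$ has constant fibre size with image the set of gauge-trivial configurations on $A$ --- then shows that $\Tr_{\bar A}(\ket{[g]}\bra{[g]})$ is one and the same operator for every class $[g]$, explicitly a fixed multiple of $\sum\ket{x}\bra{y}$ taken over gauge-trivial configurations $x,y$ on $A$ having the same restriction to the edges of $A$ with no endpoint in $\interior{V(A)}$. Finally, the normalization $\langle\psi|\psi\rangle=1$ fixes $\sum_{[g]}|c_{[g]}|^2$, so this displays a formula for $\rho_A$ with no dependence on $\ket\psi$.

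\emph{Remark on loose ends.} One should also verify the minor geometric and combinatorial inputs used above: connectivity of the ``frame'' $V(B)\setminus V(A)$ through edges of $\bar A$, triviality of the relevant gauge stabilizers, and constancy of the fibre sizes. These are routine for the nested rectangular regions in the hypothesis, and the degenerate cases in which $A$ is too small to possess an interior either make the statement trivial or make the $\sigma$-sum disappear, so they require no separate treatment.
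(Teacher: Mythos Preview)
Your argument is correct and arrives at the same explicit formula for $\rho_A$ as the paper, but the organization is genuinely different. The paper never builds a basis of $\mathcal H_B$; instead it Schmidt--decomposes an arbitrary $\ket\psi\in\mathcal H_B$ across the cut $A\mid\bar A$, groups the $A$--side terms by their boundary data $g_{\partial A}$, and then works directly with the $\bar A$--side vectors $\ket{\phi_{g_{\partial A}}}$: these are shown to be pairwise orthogonal (using the $B_p$ constraint for each plaquette straddling $\partial A$) and of equal norm (using gauge transformations at vertices of $\partial A$, which factor as a product of unitaries on $A$ and on $\bar A$), so that one reads off a uniform Schmidt decomposition. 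Your route instead fixes the orthogonal basis $\{\ket{[g]}\}$ of $\mathcal H_B$ up front and reduces everything to combinatorics of flat configurations on $B$; your potential--function lemma on the contractible region $B$ replaces both of the paper's steps at once, killing the cross terms $\Tr_{\bar A}(\ket{[g]}\bra{[g']})$ and, together with the orbit and fibre counting, forcing every diagonal term $\Tr_{\bar A}(\ket{[g]}\bra{[g]})$ to coincide. The paper's approach is lighter on bookkeeping and makes the Schmidt structure of $\ket\psi$ visible; yours avoids ever reasoning about the unspecified $\bar A$--side states and packages the gauge--equivalence arguments more uniformly via potentials. Both produce $\rho_A\propto\sum_{g_{\partial A}}\ket{\xi_{g_{\partial A}}}\bra{\xi_{g_{\partial A}}}$, which is exactly your final sum over pairs $(x,y)$ of flat configurations on $A$ that agree on the boundary edges of $A$.
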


See Figure \ref{fig:subregion} for an example of the shape and arrangement of regions $A \subset B$.

 \begin{figure}[h] 
    \centering
    \includegraphics[scale=0.5]{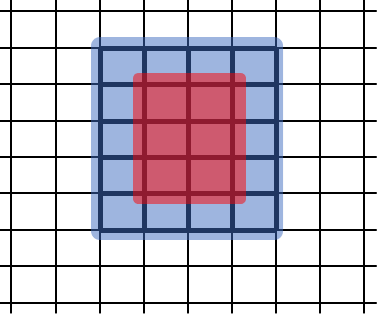}
      \caption{An example arrangement is shown of the regions $A \subset B$.  Region $A$ consists of the subset of edges in the red shaded rectangle, and likewise for $B$ in blue. Each region includes the edges in its rectangular boundary, as shown by the shading.} \label{fig:subregion}
\end{figure}

After warming up by proving that the toric code is a QECC in \S\ref{subsec:toric}, the main theorem is proved in Section \S\ref{subsec:indistinguishable}. In \S\ref{subsec:holonomies}, we point out a subtlety: we show that there exist choices of gauge groups for which the magnetic flux operators are insufficient data for specifying a gauge-invariant state, contrary to intuition from gauge theory based on e.g.\ special unitary groups.

\subsection{The toric code is a QECC: a warm-up}\label{subsec:toric}
In this section, we warm up by proving that the toric code is a QECC, which was shown in~\cite{kitaev2003fault}. The toric code is a special case of Kitaev's models, so this warm-up shows a weaker result than Theorem \ref{thm:main}, but we will improve upon both of these points in the next section.

We proceed by showing that the toric code obeys the Knill-Laflamme conditions, which state that a set of errors $\mathcal{E} = \left\{ E_i \right\}$ is correctable by an error correcting code represented by a projector $P$ onto the code subspace if and only if
\begin{equation}
  P E_i^\dagger E_j P = \alpha_{ij} P
\end{equation}
where $\alpha_{ij}$ form the entries of a Hermitian matrix (see Theorem \ref{kl_theorem}). The projection operator for the toric code is given by
\begin{equation}
  P = \prod_{v \in V} \frac{1 + X_v}{2} \prod_{p \in F} \frac{1+Z_p}{2}.
\end{equation}

Now, consider a general error on $k \in \N$ qubits. Since tensor products of Pauli operators span all possible operators, it is sufficient to consider the errors
\begin{equation}
  \mathcal{E}(k)\equiv \left\{ \bigotimes_{e \in S_k} \sigma_e : S_k \subseteq E, \abs{S_k} \leq k, \sigma_e \in \left\{ X, Y, Z \right\}\right\}.
\end{equation}
We claim that if $k > \lfloor \frac{L-1}{2} \rfloor$, where $L$ is the size of the lattice, then $\mathcal{E}(k)$ is not correctable. To see this, first note $k > \lfloor \frac{L-1}{2} \rfloor$ implies $k \ge \lceil \frac{L}{2} \rceil$. Thus, we can form the operator $E_i^\dagger E_j$ that is a tensor product of $X$ along a noncontractible loop by choosing appropriate $E_i, E_j$. Then, $E_i^\dagger E_j$ transforms two orthogonal states in the codespace into each other and therefore $P E_i^\dag E_j P \not \propto P$.

Now suppose $k \le \lfloor \frac{L-1}{2} \rfloor$. We first compute the commutation relations
\begin{align}
\begin{split}
  & (X \otimes I\otimes \dots \otimes I) \frac{1+ X_v}{2}  = \frac{1+ X_v}{2} (X \otimes I\otimes \dots \otimes I)\\
  & (X \otimes I\otimes \dots \otimes I) \frac{1+ Z_p}{2}  = \frac{1+ (-1)^{i(p)} Z_p}{2} (X \otimes I\otimes \dots \otimes I),
  \end{split}
\end{align}
where $I$ is the identity operator on $\mathbb{C}[\mathbb{Z}_2]$ and $i(p)$ is an indicator for whether the first edge is on the boundary of $p$. Similarly,
\begin{align}
\begin{split}
  & (Z \otimes I\otimes \dots \otimes I) \frac{1+ X_v}{2}  = \frac{1+ (-1)^{i(v)} X_v}{2} (Z \otimes I\otimes \dots \otimes I)\\
  & (Z \otimes I\otimes \dots \otimes I) \frac{1+ Z_p}{2}  = \frac{1+ Z_p}{2} (Z \otimes I\otimes \dots \otimes I).
  \end{split}
\end{align}
Now, we can represent, up to a phase, $E_i^\dag E_j$ as a product of Pauli's of the form $I \otimes \dots \otimes I \otimes \sigma \otimes I \otimes \dots \otimes I$, where $\sigma \in \left\{ X,Z \right\}$. We then commute $P$ across each of the factors. We first consider the edges on which a Pauli $Z$ is acted on. Then, unless every vertex is incident to an even number of them, there will exist a vertex $v$ for which $c(v)=1$, which would imply $P E_i^\dag E_j P = 0$. Otherwise, the edges form loops. Since there are at most $2k \le L-1$ edges acted on, the loops must be contractible. A similar argument holds for Pauli $X$ where there we work in the dual lattice. We conclude that $P E_i^\dag E_j P =  0$ or $E_i^\dag E_j$ is, up to a phase, a product of $X_v, Z_p$, which act trivially on the ground space. Hence the Knill-Laflamme condition is satisfied.

\subsection{States with locally zero energy density are locally indistinguishable}
\label{subsec:indistinguishable} 

We now give a proof of Theorem \ref{thm:main}.

Consider a rectangular subregion $A$, contained in a simply connected region of the surface.  (See Figure \ref{fig:subregion}.)
The rectangular assumption could be relaxed, at the cost of more complicated exposition.

Assume that some state $|\psi\rangle$ on the entire lattice is invariant under all $A_v, B_p$ operators whose support intersects with $A$.  That is, we assume
\begin{align}
\begin{split}
A_v |\psi\rangle & = |\psi\rangle \\
B_p |\psi\rangle & = |\psi \rangle
\end{split}
\end{align}
for all $A_v, B_p$ operators such that $v \in V(A)$ or $\partial p \cap E(A) \neq \emptyset$.  One can think of such a state as ``zero-energy density'' on the region $A$, where the energy density is given by the quantum double Hamiltonian.  

We will show that all such states $|\psi\rangle$ have the same reduced density matrix $\rho_A$ on the region $A$, and we will explicitly construct $\rho_A$.  

Recall that the ``group basis'' refers to the basis of states on the lattice of the form $\ket{g}\equiv \bigotimes_{e\in E}|g_e\rangle$.  For a such a basis state, it's helpful to introduce the notion of the holonomy around a loop.  Given a group basis state and an oriented closed loop on the lattice with fixed base point, the ``holonomy'' of the loop is the product of group elements on the edges, composed right to left in loop order.  While the choice of base point may generally affect the holonomy, in this proof we will only be concerned with the notion of loops with ``trivial holonomy,'' i.e.\ whose holonomy is the identity element. In that case, the base point does not matter, because the property of trivial holonomy does not depend on base point.  We discuss holonomies and their properties at greater length in Section 
\S\ref{subsec:holonomies}.  

Write $|\psi\rangle$ in the group basis. Because it is invariant under the $B_p$ operators intersecting $A$, the only product states in this expansion will be the ones with trivial holonomy on all closed loops in $A$.  Therefore we can write $|\psi\rangle$ as
\begin{align} \label{eq:decomposition1}
|\psi \rangle = \sum_{\substack{g_A \textrm{ with trivial} \\ \textrm{holonomies on } A}} |g_A\rangle_A |\phi_{g_A}\rangle_{\bar{A}},
\end{align}
where the sum is over all assignments $g_A = (g_e)_{e\in E(A)}$ of group elements to edges in $A$, such that all of the holonomies on $A$ are trivial.  The states $|\phi_{g_A}\rangle_{\bar{A}}$ are some set of states on $\bar{A}$ depending on $g_A$. Note that the states $|g_A\rangle_A $ are orthonormal, while the states $ |\phi_{g_A}\rangle_{\bar{A}}$ are not normalized and not necessarily orthogonal.

Next we will need the following result: For any two product states of group elements $|g_A\rangle$ and $|g_A'\rangle$ with trivial holonomies on $A$, and with the same group elements on the boundary $\partial A$, there exists a gauge transformation acting only on the vertices $V(A)^{\circ}$ in the interior of $A$ which transforms $|g_A\rangle_A$ to $|g_A'\rangle_A$.  That is, there is some gauge transformation $U_{\mathrm{int}}$ supported on the interior of $A$ such that 
\begin{align} \label{eq:interior_gauge_transf}
U_{\mathrm{int}} |g_A\rangle _A= |g_A'\rangle_A.
\end{align}

To build such a gauge transformation $U_{\mathrm{int}}$ that takes $|g_A\rangle_A$ to $|g_A'\rangle_A$ consider the internal vertices of $A$, ordered from left to right, then top to bottom.  Start at the top left internal vetex $v_0$, and choose the unique $g_0$ such that $A_{v_0}(g_0) |g_A\rangle_A$ matches $|g_A'\rangle_A$ on the entire top left plaquette.  Now move one vertex rightward, to internal vertex $v_1$, and choose the unique $g_1$ such that $A_{v_1}(g_1) A_{v_0}(g_0) |g_A\rangle_A$ matches $|g_A'\rangle_A$ on the top left two plaquettes.  Continue in this manner until we have found a gauge transformation on the top row of internal vertices such that both states match elements on the entire top row of plaquettes.  Repeat this procedure for the next row, and so on, until all internal vertices have been considered.  Then we have constructed the (unique) desired $U_{\mathrm{int}}$.

Consider any two terms $g_A, g_A'$ that appear in the decomposition \eqref{eq:decomposition1}.  By equation \eqref{eq:interior_gauge_transf}, there exists some gauge transformation $U_{\mathrm{int}}$ satisfying $U_{\mathrm{int}}|g_A\rangle_A  = |g_A'\rangle_A$. Since by assumption on the state $|\psi\rangle$, this gauge transformation leaves $|\psi\rangle$ invariant, we have 
\begin{align}
\begin{split}
|\phi_{g_A'}\rangle_{\bar{A}} & = {_A}\langle g_A' | \psi \rangle  \\
& =  {_A}\langle g_A' | U_{\mathrm{int}} | \psi \rangle \\
& =  {_A}\langle g_A |  \psi \rangle \\
& = |\phi_{g_A}\rangle_{\bar{A}} .
\end{split}
\end{align}

So $|\phi_{g_A'}\rangle_{\bar{A}} = |\phi_{g_A}\rangle_{\bar{A}}$ for any $g_A$, $g_A'$ with the same boundary elements.  We therefore take the subscript of $\phi$ to be an assignment of group elements to the edges of $\partial A$, e.g.\ $\phi_{g_{\partial A}}$ where $g_{\partial A} = (g_e)_{e\in E(\partial A)}$. Then, we can further refine the decomposition \eqref{eq:decomposition1} as 
\begin{align} \label{eq:decomposition2}
|\psi \rangle = \sum_{\substack{g_{\partial A} \textrm{ with trivial} \\ \textrm{holonomy on } \partial A}} |\xi_{g_{\partial A}} \rangle_A  |\phi_{g_{\partial A}}\rangle_{\bar{A}},
\end{align}
where the sum is over all assignments $g_{\partial A}$ of group elements to edges on the boundary $\partial A$, such that the holonomy on the entire boundary $\partial A$ is trivial.
The above decomposition uses the state 
\begin{align}
|\xi_{g_{\partial A}} \rangle_A \equiv \sum_{\substack{g_A \textrm{ with trivial }\\ \textrm{holonomy on }A \\ \textrm{s.t. } (g_A)|_{\partial A} = g_{\partial A}}} |g_A\rangle_A
\end{align}
where the sum is over all $g$ with trivial holonomy on $A$ whose elements on the boundary $\partial A$ match the assignment $g_{\partial A}$.

We will show that the decomposition of \eqref{eq:decomposition2} is actually a Schmidt decomposition, with uniform Schmidt coefficients, which have been absorbed into the non-normalized states $|\phi_{g_{\partial A}}\rangle_{\bar{A}}$.  We will show this by showing the states $|\phi_{g_{\partial A}}\rangle_{\bar{A}}$ are all orthogonal and have equal norm.

First, note that the states $|\phi_{g_{\partial A}}\rangle_{\bar{A}}$ and  $|\phi_{g'_{\partial A}}\rangle_{\bar{A}}$ are orthogonal for two distinct assignments $g_{\partial A}$ and $g'_{\partial A}$ of group elements to $\partial A$.  To see this, consider an edge $E_0 \in E(\partial A)$ on which  $g_{\partial A}$ and  $g'_{\partial A}$ differ.  We use the invariance of $|\psi\rangle$ under the operator $B_{P_0}$ for the plaquette $P_0$ that intersects $A$ precisely at this boundary edge $E_0$.  This invariance implies that both $|\phi_{g_{\partial A}}\rangle_{\bar{A}}$ and $|\phi_{g'_{\partial A}}\rangle_{\bar{A}}$ must be composed completely of product states in the group basis on $\bar{A}$ whose holonomy on $P_0$ (including the edge $E_0$) is trivial.  But because $g_{\partial A}$ and  $g'_{\partial A}$ differ at $E_0$, any two product states in the expansion of $|\phi_{g_{\partial A}}\rangle_{\bar{A}}$ and $|\phi_{g'_{\partial A}}\rangle_{\bar{A}}$ respectively must differ at some edge of $P_0$ in $\bar{A}$.  Thus $|\phi_{g_{\partial A}}\rangle_{\bar{A}}$ and $|\phi_{g'_{\partial A}}\rangle_{\bar{A}}$ must be orthogonal.

Next we will need the fact that for any two assignments $g_{\partial A}$ and $g'_{\partial A}$ of group elements to $\partial A$ with trivial holonomy on $\partial A$, there exists a gauge transformation $U_{\partial A}$ acting only on the vertices of $\partial A$ that brings $g_{\partial A}$ to  $g'_{\partial A}$.  To build such a gauge transformation $U_{\partial A}$, choose a contiguous ordering of the $L$ vertices of $\partial A$, starting with some vertex $v_0$.  Choose the unique $g_1$ such that $A_{v_1}(g_1)$ acting on the assignment $g_{\partial A}$ will match the assignment $g'_{\partial A}$ on the boundary edge from $v_0$ to $v_1$.  Next, choose the unique $g_2$ such that $A_{v_2}(g_2) A_{v_1}(g_1)$ acting on the assignment $g_{\partial A}$ will match the assignment $g'_{\partial A}$ on the boundary edges from $v_0$ to $v_2$.  Proceed in this way until finding a boundary gauge transformation $A_{v_{L-1}}(g_{L-1})\cdots A_{v_2}(g_2) A_{v_1}(g_1)$ acting on the assignment $g_{\partial A}$ matches the assignment $g'_{\partial A}$ on all the edges from $v_0$ to $v_{L-1}$.  Then these two assignments will also match on the final edge from $v_{L-1}$ to $v_0$, using the fact that both  $g_{\partial A}$ and  $g'_{\partial A}$ were assumed to have trivial holonomy along the boundary.  

To see that the states any two states $|\phi_{g_{\partial A}}\rangle_{\bar{A}}$ and $|\phi_{g'_{\partial A}}\rangle_{\bar{A}}$ have equal norm, consider the gauge tranformation $U_{\partial A}$ mentioned above, taking $g_{\partial A}$ to  $g'_{\partial A}$ .  We can factor $U_{\partial A}$ as a product of a unitary acting on $A$ and a unitary acting on $\bar{A}$, i.e.\ $U_{\partial A} = V_A V_{\bar{A}}$.  From the definition of $|\xi_{g_{\partial A}}\rangle_A$, we can see 
\begin{align}
V_A |\xi_{g_{\partial A}}\rangle_A = |\xi_{g'_{\partial A}}\rangle_A. 
\end{align}
Then from the invariance of $|\psi\rangle$ under $U_{\partial A}$, and using decomposition \eqref{eq:decomposition2}, we have 
\begin{align}
\begin{split}
|\phi_{g'_{\partial A}}\rangle_{\bar{A}} & = {_A}\langle \xi_{g'_{\partial A}} | \psi \rangle \\
& = {_A}\langle \xi_{g'_{\partial A}}| U_{\partial A} | \psi  \rangle \\
& =  V_{\bar A} ({_A}\langle \xi_{g'_{\partial A}}|  V_A  | \psi  \rangle) \\
& = \langle V_{\bar A} \xi_{g_{\partial A}}|  | \psi  \rangle \\
& = V_{\bar A} |\phi_{g_{\partial A}}\rangle_{\bar{A}}.
\end{split}
\end{align}
Because  $|\phi_{g_{\partial A}}\rangle_{\bar{A}}$ and $|\phi_{g'_{\partial A}}\rangle_{\bar{A}}$ are related by a unitary $V_{\bar A}$, it follows they must have the same norm.

We conclude that decomposition of \eqref{eq:decomposition2} is a Schmidt decomposition, with uniform Schmidt coefficients that have been absorbed into the non-normalized states $|\phi_{g_{\partial A}}\rangle_{\bar{A}}$.  Thus we can immediately calculate the reduced density matrix 
\begin{align}
\begin{split} \label{eq:rho_A_g}
\rho_A & = \Tr_{\bar A}{|\psi\rangle \langle \psi | } \\
& =\frac{1}{|G|^{|\partial A|-1}} \sum_{\substack{ g_{\partial A} \textrm{ with trivial}  \\ \textrm{ holonomy on } \partial A}} |\xi_{g_{\partial A}} \rangle_A  \langle\xi_{g_{\partial A}}|_A
\end{split}
\end{align}
where $|\partial A|$ is the number of boundary edges, and the normalization factor is calculated by noting there are $|G|^{|\partial A|-1}$ terms in the sum. (The latter is the number of assignments $g_{\partial A}$ of group elements to boundary edges such that the boundary has trivial holonomy: the group elements on all but one boundary edge may be chosen freely, and the element on the final edge is uniquely determined by requiring trivial holonomy.)

This reduced state on $A$ is manifestly invariant of the state $|\psi\rangle$, depending only on our original assumptions that $|\psi\rangle$ has zero energy-density on $A$.

\subsection{Wilson loops are not a complete set of observables}\label{subsec:holonomies}

It is standard in gauge theory to think of Wilson loop operators as the basic gauge-invariant observables. In typical models, such as gauge theory based on special unitary groups, or in Kitaev's lattice model based on the group $\mathbb{Z}_2$, these observables are sufficient to completely characterize a gauge-invariant state; see Sengupta \cite{sengupta1994gauge} for a discussion.  It is therefore tempting to think that this holds quite generally, e.g.\ for lattice gauge theory or Kitaev's lattice model based on any finite group $G$. Such a result would seem to suggest that Theorem \ref{thm:main} is ``morally obvious'': if gauge invariant states are determined by their Wilson loops, our main result would simply be an easy corollary of a local version of this statement. 

In fact, we will show that this naive intuition fails for certain choices of $G$. That is, for judicicously chosen $G$, we will exhibit a pair of \emph{orthogonal} gauge-invariant states with the same Wilson loops. This result emphasizes that it is a property only of the ground space---where the Wilson loops are not only locally the same, but also locally trivial---that states are determined by their non-Abelian fluxes.

The question of completeness for Wilson loop observables was addressed by Sengupta \cite{sengupta1994gauge} for continuous gauge groups.  To study quantum double models, we will be interested in the case of finite gauge groups and spatial lattices, which Sengupta comments on only briefly.

Let us state our claim more precisely. We will work in the gauge invariant subspace 
\begin{align}
    \mathcal{H}_{\mathrm{gauge}} = \{|\psi\rangle\in\mathcal{H}\mid A_v|\psi\rangle = |\psi\rangle, \ \forall v \in V \}.
\end{align}
The magnetic plaquete operators $B_{(v,p)}(h)$ do not in general preserve the gauge invariant subspace, so we will consider combinations which do:
\begin{align}
    B_{(v,p)}([h]) \equiv \frac{|[h]|}{|G|}\sum_{g\in G}B_{(v,p)}(ghg^{-1})
\end{align}
where $|[h]|$ is the order of the conjugacy class of $h$. These operators depend on $h$ only through its conjugacy class; heuristically, they compute the product of group elements around the plaquette and check whether or not that product is conjugate to $h$, annihilating the state if it is not, and stabilizing the state if it is. We are free to define more general magnetic operators $B_\gamma([h])$ for any closed loop $\gamma$, defined in the obvious way.  To avoid confusion, we will refer to the product of group elements around a path $\gamma$ with a base point (which is measured by the operator $B_\gamma(h)$ and is \emph{not} gauge-invariant) as a \emph{holonomy}; the conjugacy class of the product of group elements around a path $\gamma$ without a base point (which is measured by $B_\gamma([h])$ and \emph{is} gauge-invariant) will be referred to as a \emph{Wilson loop}.\footnote{Note that $B_\gamma(h)$ is a special case of the ribbon operator $F^{(\mathds{1},h)}(\gamma)$ associated to a closed ribbon $\gamma$, as defined in equation (24) of \cite{kitaev2003fault}. Therefore, Wilson loop operators $B_\gamma([h])$ are particular linear combinations of ribbon operators.}

Using e.g.\ equations \eqref{operator_properties}, it is straightforward to show that the $B_\gamma([h])$ commute with every $A_v$ and thus map $\mathcal{H}_{\mathrm{gauge}}\to \mathcal{H}_{\mathrm{gauge}}$. Moreover, they all commute with one another, and so we can work with a basis of the gauge-invariant subspace consisting of simultaneous eigenstates of the $B_\gamma([h])$. Our claim is then the following.

\begin{proposition}\label{prop:wilsonincomplete}
There exist Kitaev lattice models $(\Sigma,\Lambda,G)$ as well as an orthonormal pair of gauge-invariant states $|\psi\rangle$, $|\chi\rangle \in\mathcal{H}_{\mathrm{gauge}}$ which are eigenstates of every Wilson loop operator $B_\gamma([h])$ with identical eigenvalues, 
\begin{align}
    \langle\psi|B_\gamma([h])|\psi\rangle = \langle\chi|B_\gamma([h])|\chi\rangle.
\end{align} 
\end{proposition}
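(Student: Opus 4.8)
The plan is to find the smallest workable example of the phenomenon and exhibit the two states essentially by hand, using representation theory to organize the search. First I would reduce the problem: on a fixed surface $\Sigma$ with lattice $\Lambda$, the gauge-invariant subspace $\mathcal{H}_{\mathrm{gauge}}$ is spanned by conjugation-orbit sums of flat-connection-like states, but we do \emph{not} want to restrict to flat connections (that is the ground space, where Theorem \ref{thm:main} shows Wilson loops do suffice). Instead I would take the simplest nontrivial topology, $\Sigma = S^2$ (or even a disk-like patch, since the claim is not about ground states), and work with a small lattice --- e.g.\ a single plaquette $p$ together with enough edges that the holonomy around $p$ can be an arbitrary group element $g$, with all vertex constraints $A_v|\psi\rangle = |\psi\rangle$ imposed. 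After gauging, a basis of $\mathcal{H}_{\mathrm{gauge}}$ is indexed by conjugacy classes: for each class $[g]$ one gets the uniform superposition over group elements in that class (spread appropriately over the edges), and the Wilson loop operator $B_p([h])$ simply reads off whether $[g] = [h]$. So far Wilson loops \emph{do} separate states. The point is that a single plaquette is too coarse; I would therefore use a lattice with (at least) two adjacent plaquettes $p_1, p_2$ sharing an edge, so that a gauge-invariant state is labeled by more refined data than the pair of conjugacy classes $([g_1],[g_2])$ --- namely by the $G$-conjugation orbit of the pair $(g_1, g_2)$, whereas the full algebra of Wilson loops only measures $[g_1]$, $[g_2]$, and $[g_1 g_2]$ (and conjugacy classes of products around other contractible loops, all of which are functions of $(g_1,g_2)$ up to conjugacy).

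The representation-theoretic heart is then: choose $G$ so that there exist two pairs $(g_1,g_2)$ and $(g_1', g_2')$ of elements with
\begin{align}
[g_1] = [g_1'], \quad [g_2] = [g_2'], \quad [g_1 g_2] = [g_1' g_2'],
\end{align}
and more generally the conjugacy class of every word in $g_1, g_2$ agrees with that of the corresponding word in $g_1', g_2'$, yet $(g_1, g_2)$ and $(g_1', g_2')$ are \emph{not} simultaneously conjugate, i.e.\ lie in distinct $G$-orbits under diagonal conjugation. Such ``non-conjugate pairs with conjugate products'' are classical; the smallest examples live in groups like $Q_8$, $S_4$, or more robustly in $\mathrm{SL}(2,\mathbb{F}_3)$ or certain groups of order $16$ or $32$ --- I would check $Q_8$ first, then fall back to a slightly larger $2$-group if the ``all words'' condition fails there. (A clean sufficient device: take $G$ admitting an automorphism $\theta$ fixing every conjugacy class setwise but not inner; then $(g_1,g_2)$ and $(\theta g_1, \theta g_2)$ automatically have matching conjugacy-class data for every word, and are non-conjugate pairs as long as $\theta$ is not inner restricted to $\langle g_1, g_2\rangle$. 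Groups with a non-inner class-preserving automorphism exist --- the smallest has order $32$.) Given such a $G$ and pair of pairs, define $|\psi\rangle$ (resp.\ $|\chi\rangle$) to be the normalized $G$-orbit sum in $\mathcal{H}_{\mathrm{gauge}}$ of the configuration with plaquette holonomies $(g_1,g_2)$ (resp.\ $(g_1',g_2')$); distinct orbits give orthonormal states by the basis statement above, and the matching-conjugacy-class condition for \emph{all} contractible loops is exactly the statement $\langle\psi|B_\gamma([h])|\psi\rangle = \langle\chi|B_\gamma([h])|\chi\rangle$ for every $\gamma$ and $h$, since on an orbit-sum state $B_\gamma([h])$ acts as the scalar $1$ or $0$ according to whether the (well-defined) conjugacy class of the holonomy around $\gamma$ equals $[h]$.

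The remaining steps are bookkeeping: (i) verify that on a surface like $S^2$ with the chosen two-plaquette lattice, the $B_\gamma([h])$ for contractible $\gamma$ really do generate no finer information than the conjugacy classes of words in $g_1, g_2$ --- this follows because every closed loop in the lattice is, up to homotopy on $S^2$, a product of boundary loops of plaquettes, and $B_\gamma([h])$ depends only on the conjugacy class of the corresponding word; (ii) confirm the two orbit sums are genuinely orthogonal (immediate: disjoint supports in the group basis) and that each is a single Wilson-loop eigenstate (immediate from the delta-function action); (iii) confirm that such $(g_1,g_2)$, $(g_1',g_2')$ actually exist in a concrete finite group, which is the one genuinely nontrivial point. \textbf{The main obstacle} is precisely step (iii) combined with the requirement that the conjugacy classes agree for \emph{every} word, not just $g_1$, $g_2$, $g_1 g_2$: a randomly chosen near-miss of simultaneous conjugacy will usually be detected by \emph{some} longer word, so one must either invoke the class-preserving-automorphism construction (paying a larger $|G|$) or do a careful finite search in a candidate group like $\mathrm{SL}(2,\mathbb{F}_3)$ and check all words explicitly --- a finite but slightly delicate verification. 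Everything else is routine given the machinery already set up in the paper.
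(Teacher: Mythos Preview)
Your proposal is correct and ultimately rests on the same key device as the paper: a finite group $G$ admitting a class-preserving outer automorphism $\phi$, applied edgewise to a group-basis configuration to produce a gauge-inequivalent state with identical Wilson-loop data. The main execution difference is that the paper bypasses your exploratory search in small groups and goes directly to the automorphism construction, but on a \emph{large} lattice (containing a $|G|\times|G|$ square) engineered so that \emph{every} element of $G$ appears as some based holonomy $g_\gamma$. This makes the ``not simultaneously conjugate'' step immediate from $\phi$ being outer on all of $G$, eliminating the extra hypothesis you flag---that $\phi$ remain non-inner when restricted to $\langle g_1,g_2\rangle$. Your minimal-lattice route is more economical but needs that check (choosing $g_1,g_2$ to generate $G$ suffices); the paper's route trades lattice size for a cleaner group-theoretic reduction. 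One small technical wrinkle: on $S^2$ a two-face CW structure has a $1$-skeleton with only one independent loop, so your two-plaquette picture yields a single holonomy; you need at least three plaquettes, or embed your patch into a larger closed-surface lattice, to realize two independent $g_1,g_2$.
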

Thus, $|\psi\rangle$ and $|\chi\rangle$ are gauge-invariant states which cannot be distinguished by Wilson loop observables. The rest of this section is dedicated to the proof of this proposition.

The key ingredient which enters our construction is the existence of finite groups which admit \emph{outer class automorphisms}.\footnote{As we will see, this constitutes a sufficient condition on $G$ so that, for $\Lambda$ large enough, $(\Sigma,\Lambda, G)$ will satisfy Proposition \ref{prop:wilsonincomplete}.} An automorphism $\phi:G\to G$ is said to be outer if it is not of the form $\phi(g)= hgh^{-1}$ for any $h$ in $G$; it is a class automorphism if it preserves conjugacy classes, i.e.\ if $g$ is conjugate to $\phi(g)$ for every $g$ in $G$. We will not need any examples of such groups for our proof, so for the remainder of this section take their existence for granted. The interested reader is encouraged to consult e.g.\ \cite{wall1947finite} for examples of explicit constructions. 

Now, take $\Sigma$ any closed 2D surface, $G=\{h_1,\dots, h_N\}$ any finite group which admits an outer class automorphism $\phi$, and $\Lambda$ any lattice on $\Sigma$ which has at least an $N\times N$ square sublattice $A$, where $N:=|G|$. We start by constructing a state of $\mathcal{H}$ in the group basis which features every possible holonomy. In other words, we want a state $|g\rangle = \bigotimes_{e\in E(\Lambda)}|g_e\rangle$ such that, for each $h$ in $G$, there is some loop $\gamma$ such that $B_\gamma(h)|g\rangle = |g\rangle$. This is easy to achieve. Let $A_n\subset A$ for $n=1,\dots, |G|$ be the square of side-length $n$ whose lower-left corner sits at the lower left corner of $A$. Assign group elements to the edges of $A_1$ in such a way that that the product of elements around $A_1$ (starting at the lower-left corner of $A$) is equal to $h_1$. Proceed inductively by choosing group elements associated to the unassigned edges of $A_n$ in such a way that the holonomy around $A_n$ is equal to $h_n$. Finally, assign group elements to any remaining edges however you would like. The state $|g\rangle$ so constructed satisfies that $B_{A_n}(h_n)|g\rangle = |g\rangle$, and moreover $B_{A_n}([h_n])|g\rangle = |g\rangle$. 
Now, define $|\psi\rangle$ to be the ``gauge-symmetrization'' of $|g\rangle$,
\begin{align}\label{psistate}
    |\psi\rangle = \mathcal{N}_\psi\sum_{|g'\rangle\sim |g\rangle}|g'\rangle
\end{align}
where $\sim$ denotes gauge equivalence, and $\mathcal{N}_\psi$ is chosen to normalize $|\psi\rangle$. Gauge transformations at most change holonomies by conjugation, so $|\psi\rangle$ has the same Wilson loops as the state $|g\rangle$ from which it was constructed. We can do the same for the state $|\phi(g)\rangle= \bigotimes_{e\in E(\Lambda)}|\phi(g_e)\rangle$, and define 
\begin{align}\label{chistate}
    |\chi\rangle = \mathcal{N}_\chi \sum_{|g'\rangle\sim|\phi(g)\rangle}|g'\rangle
\end{align}
Since $\phi$ preserves conjugacy classes, $|\phi(g)\rangle$ and $|g\rangle$ have the same Wilson loops, and so it follows that $|\psi\rangle$ and $|\chi\rangle$ have the same Wilson loops as well. It remains only to show that these two states are orthogonal. We will do this by showing that $|g\rangle$ is gauge-inequivalent to $|\phi(g)\rangle$, from which it follows that every term in the sum in equation \eqref{psistate} is orthogonal to every term in the sum in equation \eqref{chistate}. For this we need the following lemma.
\begin{lemma}
Fix an arbitrary base-point $v_0$ in $V$. Two states $|g\rangle$, $|g'\rangle$ in the group basis are gauge equivalent if and only if their holonomies based at $v_0$ agree up to simultaneous conjugation, i.e.\ if and only if there is a single $h$ in $G$ such that $g'_\gamma = hg_\gamma h^{-1}$ for every loop $\gamma$ which starts and ends at $v_0$. 
\end{lemma}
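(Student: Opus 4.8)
The plan is to reduce everything to a spanning-tree normal form, in close parallel with the proof of the ground-state-degeneracy theorem in Section~\ref{subsec:gs_space}. For the ``only if'' direction I would use gauge-covariance of holonomies: a quick induction on path length shows that a single gauge transformation $A_v(k)$ leaves the holonomy of any loop based at $v_0$ unchanged when $v\neq v_0$, and conjugates it by $k$ when $v=v_0$ (the occurrences of $k$ at an interior vertex of a path telescope away). A general gauge equivalence is a composite of such moves, so its net effect on every $v_0$-based holonomy is conjugation by a single element $h$, namely the ordered product of the $k$'s applied at $v_0$; hence $g'_\gamma=hg_\gamma h^{-1}$ for all loops $\gamma$ based at $v_0$.

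For the ``if'' direction, suppose $h\in G$ satisfies $g'_\gamma=hg_\gamma h^{-1}$ for every loop $\gamma$ based at $v_0$. First I would apply $A_{v_0}(h)$ to $|g\rangle$; by the covariance above this produces a gauge-equivalent state whose $v_0$-based holonomies now agree exactly with those of $|g'\rangle$, so it is enough to handle the case $h=\mathds{1}$, i.e.\ to show that two group-basis states with identical $v_0$-based holonomies are gauge equivalent. To do that I would fix a maximal spanning tree $T$ rooted at $v_0$, order the non-root vertices $v_1,\dots,v_m$ ($m=|V|-1$) so that each $v_i$ is joined by a tree edge to some $v_j$ with $j<i$, and process them in turn: at step $i$, apply the unique $A_{v_i}(k_i)$ that sets the tree edge from $v_i$ to its parent equal to $\mathds{1}$. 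Since $A_{v_i}$ acts only on edges incident to $v_i$, and no edge fixed at an earlier step is incident to $v_i$, the previously fixed edges remain $\mathds{1}$, and $A_{v_0}$ is never invoked. This brings any group-basis state, using only gauge transformations away from $v_0$, to a normal form in which every tree edge carries $\mathds{1}$; in such a form the element on a non-tree edge $e$ equals the $v_0$-based holonomy of the fundamental loop obtained by closing $e$ up through $T$ (the tree portions contribute only identities). Two states with equal $v_0$-based holonomies therefore have the same normal form, hence are gauge equivalent; combined with the first step this gives $|g\rangle\sim|g'\rangle$.

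The only delicate points are bookkeeping rather than genuine obstacles: keeping the direction of the conjugation consistent with whatever holonomy and gauge conventions are in force in the ``only if'' step (immaterial, since the claim is only ``up to simultaneous conjugation''), and checking in the normal-form reduction that the transformations can be scheduled so that each one clears a fresh tree edge without disturbing the earlier ones and without ever acting at $v_0$. Both are light repackagings of the spanning-tree computation already carried out for the ground-state count. If one prefers a structural phrasing: the orbit of $|g\rangle$ under all gauge transformations is obtained by first reducing to normal form using transformations away from $v_0$ and then simultaneously conjugating all fundamental-loop holonomies by the element applied at $v_0$, which is exactly the content of the lemma.
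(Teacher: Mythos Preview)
Your proposal is correct and follows essentially the same strategy as the paper's proof: the forward direction via gauge-covariance of based holonomies, then a reduction at $v_0$ to align holonomies exactly, followed by a spanning-tree sweep using only gauge transformations away from $v_0$. The one cosmetic difference is that the paper walks along tree paths to make $|g\rangle$ match $|g'\rangle$ edge by edge, whereas you reduce both states to a common normal form with all tree edges equal to $\mathds{1}$; these are equivalent packagings of the same inductive step.
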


\begin{proof}
In the forward direction, if $|g\rangle$ and $|g'\rangle$ are gauge-equivalent, they are by definition related by a product of gauge-transformations at different vertices, $|g'\rangle = \prod_{v\in V}A_v(h_v)|g\rangle$ for some $h_v$ in $G$. The gauge-transformations away from $v_0$ do not change the holonomies based at $v_0$, while the single gauge-transformation $A_{v_0}(h_{v_0})$ at $v_0$ changes all holonomies by conjugation by $h_{v_0}$, i.e.\ $g'_\gamma = h_{v_0}g_\gamma h_{v_0}^{-1}$. 

In the reverse direction, assume that the holonomies based at $v_0$ are simultaneously conjugate by an element $h$ in $G$. We will specify a sequence of gauge transformations which transforms $|g\rangle$ into $|g'\rangle$. First, we specify the gauge transformation needed at the base-point. Since acting with a gauge transformation at $v_0$ conjugates all holonomies based at $v_0$, we act with $A_{v_0}(h^{-1})$, so chosen because $A_{v_0}(h^{-1})|g\rangle$ will have the same based holonomies as $|g'\rangle$. 

Lay down a maximal spanning tree $T$ of $\Lambda$ which contains $v_0$. Recall that gauge transformations away from $v_0$ do not change holonomies based at $v_0$, and note that states in the group basis are fully determined by their holonomies based at $v_0$ as well as the group elements assigned to the edges of $T$. With this in mind, we will apply our remaining gauge transformations to the vertices of $T$ in order to make the states agree on the edges of $T$ (and therefore on all of $\Lambda$). 

We proceed inductively. Choose a path from $v_0$ to any leaf, and label the vertices which arise as $[v_0,v_1,\dots, v_r]$ and the edges between them as $[e_1,\dots, e_r]$. Compare the group element assigned to $e_1$ by $A_{v_0}(h^{-1})|g\rangle$ and $|g'\rangle$, and act with the unique gauge transformation $A_{v_1}(h_1)$ which makes $A_{v_1}(h_1)A_{v_0}(h^{-1})|g\rangle$ and $|g'\rangle$ agree at the edge $e_1$. Inductively walk through the path, and at the $n$th step, apply the unique gauge transformation $A_{v_n}(h_n)$ which makes $A_{v_n}(h_n)\cdots A_{v_1}(h_1)A_{v_0}(h^{-1})|g\rangle$ agree with $|g'\rangle$ at the edge $e_n$, noting that application of $A_{v_n}(h_n)$ does not interfere with any of the previously assigned edges $e_1,\dots, e_{n-1}$ since by assumption $T$ is a tree. The two states one has at the end of this procedure agree at all the edges $e_1,\dots, e_r$.

One can continue to apply this protocol to any remaining paths from a vertex in $T$ to one of its its leaves which have not yet been traversed. Calling the overall gauge transformation one obtains $\mathcal{G}$, the net result is that $\mathcal{G}|g\rangle$ and $|g'\rangle$ agree at every edge in the spanning tree, and have identical holonomies. It follows that they agree on all of $\Lambda$, and so they are in fact equal. 
\end{proof}

Now let $v_0$ be the lower-left hand corner of $A$. Recall that by construction, $g_{A_n} = h_n$ and meanwhile $\phi(g)_{A_n} = \phi(h_n)$. These holonomies cannot be the same up to simultaneous conjugation: since every element of $G$ is realized as a holonomy, this would imply that $\phi$ is not an outer automorphism, in contradiction with our assumption. Thus $|g\rangle$ is gauge-inequivalent from $|\phi(g)\rangle$, and so $|\psi\rangle$ is orthogonal to $|\chi\rangle$.

\subsection{Topological entanglement entropy} \label{subsec:TEE}
An important measure of topological order is topological entanglement entropy \cite{Kitaev:2005dm, levin2006detecting}.  The topological entanglement entropy refers to a particular term in the von Neumann entropy of the reduced state of the vacuum.  This term is well-defined whenever the entanglement entropy $S(\rho_A)$ for a simply connected region $A$ with boundary of length $L$ satisfies an ``area law,'' i.e.\ satisfies
\begin{align} \label{eq:S_topo_defn}
S(\rho_A) = \alpha L - \gamma + o(1),
\end{align}
where $\alpha, \gamma$ are some nonnegative constants.  In this case, the term $- \gamma$ is referred to as the topological entanglement entropy $S_{\textrm{topo}}$,
\begin{align} 
S_{\textrm{topo}} = - \gamma.
\end{align}
While this definition often works for simple models, the $L \to \infty$ limit on the lattice may not be well-defined for generic models.\footnote{Defining the $L \to \infty$ limit requires a family of regions of increasing size.  However, on the lattice, the boundaries of these regions will have ``sharp'' features like corners, and these sharp features make the limit sensitive to the choice of region.}  We therefore emphasize an alternative definition, which uses any three large, mutually adjacent regions $A$, $B$ and $C$ to define
\begin{align} \label{eq:S_topo_defn_regions}
    S_{\textrm{topo}} = S(\rho_{ABC}) - S(\rho_{AB})- S(\rho_{AC})- S(\rho_{BC}) + S(\rho_{A}) + S(\rho_{B}) + S(\rho_{C})
\end{align}
as in \cite{Kitaev:2005dm}.
This definition makes it clear that $S_{\textrm{topo}}$ is independent of any local contributions from ``sharp'' features like corners on the boundaries of the regions, since such contributions will cancel in equation \eqref{eq:S_topo_defn_regions}. Moreover, if we imagine a simple model in which entanglement entropy of every region obtains an additive contribution from each boundary edge as well as an overall additive constant $-\gamma$, then the precise definition in equation \eqref{eq:S_topo_defn_regions} matches the more heuristic definition in equation \eqref{eq:S_topo_defn}.

Assuming the system is gapped (so that the vacuum is expected to satisfy an area law in entanglement), one can more generally argue that $S_{\textrm{topo}}$ is independent of both deformations of regions $A,B,C$, and also perturbations of the Hamiltonian, so long as the system remains gapped \cite{Kitaev:2005dm}. It can therefore be used to characterize topological phases of matter.

After analyzing the ground state of the quantum double model in 
\S\ref{subsec:indistinguishable}, we are well-equipped to study the topological entanglement entropy.  Using the explicit form of the ground state reduced density matrix given in equation \eqref{eq:rho_A_g}, we find
\begin{align} 
S(\rho_A) = (L-1)\log{|G|} = L \log{|G|} - \log{|G|}.
\end{align}
By definition \eqref{eq:S_topo_defn} or \eqref{eq:S_topo_defn_regions}, we then find $S_{\textrm{topo}} = - \log{|G|}$.  

Recall that the sector of the quantum double model with $A_v=1$ may also be considered as a gauge theory: we consider the operators $A_v(g)$ as gauge transformations, and then we restrict our attention to the gauge-invariant Hilbert space, or ``physical'' Hilbert space, 
\begin{align} \label{eq:H_gauge}
\mathcal{H}_{\textrm{phys}} = \{\ket{\psi} \in \mathcal{H} \, : \, A_v \ket{\psi} = \ket{\psi} \, \text{for all } v \} .
\end{align}

In a gauge theory, unlike in an ordinary lattice theory, there are multiple, inequivalent ways to define the entanglement entropy of a region.  The ambiguity arises because when considering a region $A$ (consisting of some edges), the gauge-invariant Hilbert space $\mathcal{H}_{\textrm{phys}}$ does not factorize into a tensor product of factors associated with $A$ and the complement $\bar{A}$, unlike the full Hilbert space $\mathcal{H}$.  For a discussion of defining entanglement entropy in gauge theories, see \cite{acoleyen2015entanglement, soni2016aspects}.  Also see Appendix \ref{sec:EE_gauge} for an introduction.

Because there are multiple definitions for the full entanglement entropy in gauge theory, we can ask whether these definitions lead to the same answer for the \emph{topological} entanglement entropy. We will discuss three definitions in particular (introduced in the paragaph below equation \eqref{eq:EE_gauge}), including the ``ordinary'' definition which is calculated using the non-gauge-invariant/extended Hilbert space. We will find that two of these definitions --- the extended Hilbert space definition and the so-called ``algebraic'' definition --- differ by a term proportional to $L$, hence yielding the same answer for the term $S_{\textrm{topo}}$. This calculation corrects a comment made by \cite{soni2016aspects} and repeated elsewhere \cite{wong2018note, lin2018comments}, in the context of lattice gauge theories, stating that only one of these two definitions correctly reproduces the topological entanglement entropy.\footnote{We agree with the explicit calculations in the seminal work \cite{soni2016aspects}, but we rectify their incorrect suggestion that the algebraic definition of entanglement entropy does not accurately capture the topological entanglement entropy, e.g.\ when they state ``Dropping
some or all of these classical contributions would therefore not yield the correct result.'' Note that in \cite{soni2016aspects}, they refer to the algebraic definition of entanglement entropy as the ``electric centre'' definition.}  (The quantum double model coincides with the weakly coupled limit of a lattice gauge theory.)

We begin by considering reduced density matrices of general states in the gauge-invariant Hilbert space $\mathcal{H}_{\textrm{phys}}$. After examining the structure of the density matrix for a gauge-invariant state, we will be able to describe various possible definitions of entanglement entropy when the system is considered as a gauge theory.  

Consider a rectangular region $A$ (a subset of edges) and the reduced state $\rho_A$ on $A$ for some $\ket{\psi} \in \mathcal{H}_{\textrm{phys}}$. The region is rectangular for simplicity and includes the edges on its boundary, e.g.\ like region $A$ in Figure \ref{fig:subregion}.

Consider the boundary vertices $v \in \partial A$, and let $L=|\partial A|$ be the number of such vertices.  Then for each boundary vertex $v$, consider the gauge transformation $A_v(g)$ at $v$ but restricted to the edges from $v$ that lie in $A$. That is, $A_v(g)$ multiplies by $g$ (or $g^{-1}$, according to orientation)  for each edge that lies in $A$ of the boundary vertex $v$. We call this the ``boundary gauge transformation'' at $v$.  

For each boundary vertex $v$, there is a separate action of $G$ on the Hilbert space $\mathcal{H}_A$ of region $A$, where $\mathcal{H}_A$ is the associated tensor factor in the ``extended'' (non-gauge-invariant) Hilbert space.  The aforementioned actions commute.  We can therefore decompose $\mathcal{H}_A$ into representations of these actions as 
\begin{align} \label{eq:H_A_decomp}
\mathcal{H}_A = \bigoplus_{R_1,\dots,R_L \in \hat{G}} \mathcal{V}_{R_1}
\otimes \cdots \otimes \mathcal{V}_{R_L} \otimes \mathcal{W}_{\vec{R}}
\end{align}
where $\hat{G}$ denotes the set of inequivalent irreducible representations of $G$, and the direct sum is over all assignments $\vec{R}=(R_1,\dots,R_L)$ of irreps $R_i$ to boundary vertex $v_i$. The Hilbert space $\mathcal{V}_{R_i}$ denotes the Hilbert space of irrep $R_i$ of boundary gauge transformations at boundary vertex $v_i$, and the final Hilbert space factor $\mathcal{W}_{\vec{R}}$ is a multiplicity factor accounting for the multiplicity of representations of the boundary gauge transformations on $\mathcal{H}_A$.

Because $\ket{\psi} \in \mathcal{H}_{\textrm{phys}}$ is invariant under the full gauge transformation $A_v(g)$ at $v$, and because $A_v(g)$ factorizes into a part in $A$ and a part outside $A$, it follows that the boundary gauge transformations commute with the reduced state $\rho_A$. We can therefore decompose $\rho_A$ block-diagonally into the sectors of equation \eqref{eq:H_A_decomp} as 
\begin{align} \label{eq:rho_sector}
\mathcal{\rho}_A = \bigoplus_{R_1,\dots,R_L \in \hat{G}} p_{\vec{R}} \frac{\mathds{1}_{R_1}}{d_{R_1}}  \otimes \cdots \otimes \frac{\mathds{1}_{R_L}}{d_{R_L}}  \otimes \rho_{\vec{R}}
\end{align}
where $p_{\vec{R}}$ is a list of probabilities indexed by $(R_1,\dots,R_L)$, $\rho_{\vec{R}}$ is some state on $\mathcal{W}_{\vec{R}}$, and the factors $\frac{1}{d_{R_i}}$ are collected to ensure normalization of $\rho_{\vec{R}}$, with $d_R$ the dimension of irrep $R$.  The identity factors follow from Schur's lemma.

For a general gauge-invariant state $\ket{\psi}$, with $\rho_{\vec{R}}$ and $p_{\vec{R}}$ defined as above, the entanglement entropy (as computed in the full, non-gauge-invariant Hilbert space) is then
\begin{align} \label{eq:EE_gauge}
S(\rho_A) = -\sum_{\vec{R}} p_{\vec{R}}\log(p_{\vec{R}}) + \sum_{\vec{R}} p_{\vec{R}}\sum_{i=1}^L \log(d_{R_i}) + \sum_{\vec{R}} p_{\vec{R}}S(\rho_{\vec{R}}).
\end{align}

In Appendix \ref{sec:EE_gauge}, we introduce three definitions of entanglement entropy in gauge theory from the literature.  The first definition is simply the entanglement entropy as computed in the full non-gauge-invariant Hilbert space, as given by equation \eqref{eq:EE_gauge}.  This is sometimes called the ``extended Hilbert space definition.''  The second definition is sometimes called the ``algebraic'' definition, described by equation \eqref{eq:EE_algebraic} in Appendix \ref{sec:EE_gauge}. The algebraic definition corresponds to simply dropping the second term in equation \eqref{eq:EE_gauge}.  This term is sometimes called the ``$\log{\dim(R)}$ term,'' indicating its form.  Finally, a third definition of entanglement entropy drops both the first and second term of equation \eqref{eq:EE_gauge}.  This definition, leaving only the third term in equation \eqref{eq:EE_gauge}, gives the distillable entanglement  \cite{schuch2004nonlocal, soni2016aspects, acoleyen2015entanglement}.

In summary, various definitions entanglement entropy in gauge theory involve keeping all terms of equation \eqref{eq:EE_gauge}, keeping just the first and third term, or keeping just the third term.

The topological entanglement entropy as originally introduced by \cite{Kitaev:2005dm} uses the ``extended Hilbert space'' definition, i.e.\ uses the usual definition for entanglement entropy of a lattice theory, without special regard to the theory as a gauge theory. We regard the ``correct'' topological entanglement entropy as the value calculated this way, and we ask whether the other definitions of entanglement entropy in gauge theory also yield the ``correct'' topological entanglement entropy.  

We will find that in the quantum double model, the ``$\log{\dim(R)}$ term'' (i.e.\ the second term in equation \eqref{eq:EE_gauge}) is proportional to boundary length $L$, so it does not affect the calculation of the topological entanglement entropy.  In other words, for the case of the quantum double model, the ``algebraic'' definition of entanglement entropy does yield the correct topological entanglement entropy.  (On the other hand, considering only the distillable entanglement, i.e.\ the third term of equation \eqref{eq:EE_gauge}, does not yield the correct topological entanglement entropy.)

In fact, the ``$\log{\dim(R)}$ term'' may be seen as a sum over expectation values of the local observables $\log(d_{R_i})$.  These observables are local to vertices, for some choice of incoming edges. Their contribution then manifestly cancels in the definition of topological entanglement entropy given in \eqref{eq:S_topo_defn_regions}, in \emph{any} theory.\footnote{When the degrees of freedom live on edges, as in the quantum double model, the adjacent regions $A,B,C$ should be chosen so that they do not share edges.} However, since the contrary claim for the Kitaev double model has previously appeared in the literature, we shall take the time to calculate each term very explicitly.

To address the topological entanglement entropy of the ground state when using these various definitions of entanglement entropy in gauge theory, we simply need the reduced state $\rho_A$ of the ground state expressed in the form of equation \eqref{eq:rho_sector}.  Because the ground state is gauge-invariant, we must be able to put it in the form of equation \eqref{eq:rho_sector}, as explained above. Our goal is to explicitly obtain this form and then calculate the ``$\log{\dim(R)}$ term,'' i.e.\ the second term of equation \eqref{eq:EE_gauge}.

The following argument closely follows and elaborates Appendix B of \cite{acoleyen2015entanglement}. We will consider the trivial ground state of the quantum double model.  Because the topological entanglement entropy is computed locally, the following computation will also hold for other possible ground states, because these ground states look locally identical, as demonstrated by Theorem \ref{thm:main}.

Let $|\vec{\mathds{1}}\rangle$ be the product state in the group basis with the identity on every edge, define $P_A$ as the product of projectors $A_v$ for all vertices strictly in the interior of $A$ (so $P_A$ is local to $A$), and likewise define $P_{\bar{A}}$.  Note that we can write the trivial ground state as $|\psi_0\rangle \propto \prod_v A_v |\vec{\mathds{1}}\rangle$, which we can decompose as
\begin{align}
\begin{split}
|\psi_0\rangle & \propto P_{\bar{A}} \prod_{v \in \partial A} A_v P_A |\vec{\mathds{1}}\rangle \\
& \propto P_{\bar{A}} \sum_{g_1,\dots,g_L \in G} \prod_{i=1}^L A_{v_i}(g_i) P_A |\vec{\mathds{1}}\rangle 
\end{split}
\end{align}
where $v_1,\dots,v_n$ label the boundary vertices of $A$.  We can factor each $A_v(g)$ at boundary vertex $v$ into 
\begin{align} \label{eq:BGT_factorization}
    A_v(g)=A_v^A(g) A_v^{\bar{A}}(g),
\end{align} 
where $A_v^A(g)$ and $A_v^{\bar{A}}(g)$ are the parts of $A_v(g)$ that lie within $A$ and $\bar{A}$ respectively, i.e.\ $A_v^A(g)$ is the ``boundary gauge transformation'' discussed above, acting on $A$ from boundary vertex $v$.  Then
\begin{align} \label{eq:gs_A_ABar}
|\psi_0\rangle & \propto \sum_{g_1,\dots,g_L}  |\phi_{\bar{A}}(\vec{g})\rangle |\phi_A(\vec{g})\rangle
\end{align}
where $\vec{g}=(g_1,\dots, g_L)$ and we define the normalized states
\begin{align}
\begin{split}
    |\phi_A(\vec{g})\rangle & \propto \prod_{i=1}^L A^A_{v_i}(g_i) P_A |\vec{\mathds{1}}\rangle_A ,\\ 
    |\phi_{\bar{A}}(\vec{g})\rangle & \propto \prod_{i=1}^L A^{\bar{A}}_{v_i}(g_i) P_{\bar{A}} |\vec{\mathds{1}}\rangle_{\bar{A}}.
    \end{split}
\end{align}

Define an equivalence class $\vec{g} \sim \vec{h}$ on such lists when $\vec{g} = \vec{h}a$ for some $a \in G$.  Then $\langle \phi_A(\vec{h}) | \phi_A(\vec{g}) \rangle = 0$ when $\vec{g} \not \sim \vec{h}$.  To see this, first note $\vec{g} \sim \vec{h}$ if and only if the lists have the same ``differences,'' i.e.\ $g_i g_j^{-1} = h_i h_j^{-1}$ for all $i,j$.  Then note that $| \phi_A(\vec{g}) \rangle$ has group elements $g_i g_{i+1}^{-1}$ on its boundary edges, and if these elements do not match between $\vec{h}$ and $\vec{g}$ then $\langle \phi_A(\vec{h}) | \phi_A(\vec{g}) \rangle = 0$.  

Moreover, when $\vec{g} \sim \vec{h}$, then $
|\phi_A(\vec{h})\rangle =  | \phi_A(\vec{g}) \rangle$.  To see this, note that if $\vec{g} = \vec{h}a$, then
\begin{align}
\begin{split}
    | \phi_A(\vec{g}) \rangle &  = | \phi_A(\vec{h}a) \rangle \\
    & = \prod_{u\in A^\circ} A_u(a) | \phi_A(\vec{h}a) \rangle \\
    & =  \prod_{i=1}^L A^A_{v_i}(h_i) P_A  \prod_{i=1}^L  A_{v_i}(a) \prod_{u\in A^\circ} A_u(a)  |\vec{\mathds{1}}\rangle_A \\
      & =  \prod_{i=1}^L A^A_{v_i}(h_i) P_A |\vec{\mathds{1}}\rangle_A \\
      & =  | \phi_A(\vec{h}) \rangle 
\end{split}
\end{align}
recalling $v \in A^\circ$ denote vertices strictly in the interior of $A$. 

A nearly identical argument yields $ | \phi_{\bar{A}}(\vec{g}) \rangle =  | \phi_{\bar{A}}(\vec{h}) \rangle$ if $g \sim h$.  Moreover, if $g \not \sim h$, then $\langle \phi_{\bar{A}}(\vec{g})  | \phi_{\bar{A}}(\vec{h}) \rangle =0$.  To see this, note $| \phi_{\bar{A}}(\vec{g}) \rangle$ has a definite holonomy $g_i g_j^{-1}$ along an open path from boundary vertex $i$ to $j$, and if $g \not \sim h$, they must have some difference $g_i g_j^{-1} = h_i h_j^{-1}$, so they must have distinct holonomies, implying orthogonality.  

In summary, the state $| \phi_A(\vec{g}) \rangle , | \phi_A(\vec{h}) \rangle$ are equal for $\vec{g} \sim \vec{h}$, i.e.\ when $\vec{g}=\vec{h}a$ for some $a \in G$, and orthogonal otherwise; the same holds for states $| \phi_{\bar{A}}(\vec{g}) \rangle$ and $ | \phi_{\bar{A}}(\vec{h}) \rangle$.  Finally, note the states $| \phi_A(\vec{g}) \rangle , | \phi_A(\vec{h}) \rangle$ have equal norm, because they are related by a unitary (group multiplication), and likewise for the states  $| \phi_{\bar{A}}(\vec{g}) \rangle$ and $| \phi_{\bar{A}}(\vec{h}) \rangle$.

Then we can re-write equation \eqref{eq:gs_A_ABar} as
\begin{align} \label{eq:gs_A_ABar_Schmidt}
|\psi_0\rangle & \propto \sum_{\substack{g_1,...,g_{L-1} \\ g_L=\mathds{1}}}  |\phi_{\bar{A}}(\vec{g})\rangle |\phi_A(\vec{g})\rangle
\end{align}
noting that we can find a unique representative of each equivalence class of $\vec{g}$ by setting $g_L=\mathds{1}$.  By the above properties, equation \eqref{eq:gs_A_ABar_Schmidt} is also a Schmidt decomposition using terms of equal norm, and we immediately have 
\begin{align} \label{eq:rho_A_phi_decomp}
\Tr_{\bar{A}}(|\psi_0\rangle \langle \psi_0|) = \frac{1}{|G|^{L-1}} \sum_{\substack{g_1,...,g_{L-1} \\ g_L=\mathds{1}}}|\phi_A(\vec{g})\rangle \langle \phi_A(\vec{g})|.
\end{align}

Next we take advantage of a variant of the ``irrep'' basis discussed in Appendix \ref{sec:irrep_basis} and associated orthogonality relations.  See the appendix for an introduction to these methods.  Again following \cite{acoleyen2015entanglement}, we define states 
\begin{align}
|\vec{R},\vec{i},\alpha\rangle = \frac{1}{|G|^L} \sum_{\vec{g}} \sum_{\vec{j}} c^\alpha_{\vec{R},\vec{j}}D^{\vec{R}}_{\vec{i},\vec{j}}(\vec{g})^*|\phi_A(\vec{g})\rangle, 
\end{align}
which bear some explaining. The list $\vec{R}=(R_1,\dots,R_L)$ is any list of irreps of $G$, with irrep $R_i$ associated to boundary vertex $v_i \in \partial A$.  The list $\vec{g}=(g_1,\dots,g_L)$ is any list of group elements at each boundary vertex.  We fix a basis for each distinct irrep, and the component $i_k$ of the list $\vec{i}$ is an index for the basis of the irrep $R_k$.  The quantities $D^{R}_{i,j}(g) = \langle i|D^R(g)|j\rangle$ are the representation matrices, namely the $(i,j)$-entry of the element $g$ under the representation $R$, and we define a shorthand notation for the product
\begin{align}
    D^{\vec{R}}_{\vec{i},\vec{j}}(\vec{g}) := D^{R_1}_{i_1,j_1}(g_1)\cdots D^{R_L}_{i_L,j_L}(g_L).
\end{align} The index $\alpha$ labels copies of the singlet (trivial) representation appearing in the tensor product of irreps $R_1,\dots,R_L$.  These singlet representations are encoded by the tensor $c^{\alpha}_{\vec{R},\vec{j}}$. In particular, for fixed $\alpha$ and $\vec{R}$, $c^{\alpha}_{\vec{R},\vec{j}}$ gives the associated $G$-invariant tensor (which spans the associated singlet representation), satisfying %
\begin{align}
    \sum_{\vec{k}} D^{R_1 }_{j_1 k_1}(g)\cdots D^{R_L }_{j_L k_L}(g) c^\alpha_{\vec{R},\vec{k}} = c^\alpha_{\vec{R},\vec{j}}
\end{align} for all $g \in G$. 

The significance of the states $|\vec{R},\vec{i},\alpha\rangle$ are their simple transformation properties under boundary gauge transformations.  For a boundary gauge transformation $A_{v_1}^A(g_1)\cdots A_{v_L}^A(g_L)$ acting at boundary vertices $v_1,\dots,v_L$ with gauge transformations $g_1,\dots,g_L$ (recalling equation \eqref{eq:BGT_factorization}), direct computation shows \begin{align}
    A_{v_1}^A(g_1)\cdots A_{v_L}^A(g_L) |\vec{R},\vec{i},\alpha\rangle = \sum_{\vec{j}} D^{\vec{R}}_{\vec{j},\vec{i}}(\vec{g}) |\vec{R},\vec{j},\alpha\rangle.
\end{align}  In other words, for fixed $\vec{R}$ and $\alpha$, the states $|\vec{R},\vec{i},\alpha\rangle$ furnish the representation $R_1,\dots,R_L$ of the boundary gauge transformations.  Accordingly, for fixed $\vec{R}$ and $\alpha$, these states furnish the factors $\mathcal{V}_{R_1}
\otimes \cdots \otimes \mathcal{V}_{R_L}$ in equation  \eqref{eq:H_A_decomp}.

If we choose the invariant tensors $c^{\alpha}_{\vec{R},\vec{j}}$ to be normalized as 
\begin{align}
    \sum_{\vec{j}} c^{\alpha *}_{\vec{R},\vec{j}} c^{\beta}_{\vec{R},\vec{j}} = d_{\vec{R}}\delta_{\alpha,\beta},
\end{align} with $d_{\vec{R}}= \prod_i d_{R_i}$, then it follows the states $|\vec{R},\vec{i},\alpha\rangle$ are orthonormal.  The orthonormality may be computed directly,
\begin{align}
\begin{split}
\langle \vec{R}',\vec{i}',\alpha'| \vec{R},\vec{i},\alpha\rangle & = \sum_{\vec{g},\vec{g}', \vec{j},\vec{j}'} c^{\alpha' *}_{\vec{R}',\vec{j}'} c^\alpha_{\vec{R},\vec{j}} D^{\vec{R}'}_{\vec{i}',\vec{j}'}(\vec{g}') D^{\vec{R}}_{\vec{i},\vec{j}}(\vec{g})^*  \langle \phi_A(\vec{g}') | \phi_A(\vec{g}) \rangle \\
& = \sum_{h \in G, \vec{g}, \vec{j},\vec{j}'} c^{\alpha' *}_{\vec{R}',\vec{j}'} c^\alpha_{\vec{R},\vec{j}} D^{\vec{R}'}_{\vec{i}',\vec{j}'}(\vec{g})  D^{\vec{R}}_{\vec{i},\vec{j}}(\vec{g}h)^* \langle \phi_A(\vec{g}h) | \phi_A(\vec{g}) \rangle  \\
&= \sum_{\vec{g}, \vec{j},\vec{j}'} c^{\alpha' *}_{\vec{R}',\vec{j}'} c^\alpha_{\vec{R},\vec{j}} D^{\vec{R}'}_{\vec{i}',\vec{j}'}(\vec{g}) D^{\vec{R}}_{\vec{i},\vec{j}}(\vec{g})^*  \\
& = \sum_{\vec{j}} c^{\alpha' *}_{\vec{R}',\vec{j}'} c^\alpha_{\vec{R},\vec{j}} \delta_{\vec{R},\vec{R}'} \delta_{\vec{i},\vec{i}'}  \\
& = \delta_{\vec{R},\vec{R}'} \delta_{\vec{i},\vec{i}'} \delta_{\alpha,\alpha'}.  
\end{split}
\end{align}
From first to second line, we used the fact that $|\phi_A(\vec{g})\rangle$ and $|\phi_A(\vec{g}')\rangle$ are equal if $\vec{g}' = \vec{g}h$ for some $h \in G$ and orthogonal otherwise.  From second to third line, we use the invariance of the $c^{\alpha}_{\vec{R},\vec{j}}$ tensor under $h$, discussed above.  From third to fourth, we use the orthogonality relation of  \eqref{eq:schur_orthogonality}, and finally from fourth to fifth we use the orthonormality of the $c^{\alpha}_{\vec{R},\vec{j}}$ tensors.

It will be helpful to count the number of states $|\vec{R},\vec{i},\alpha\rangle$.  We have
\begin{align}
| \{ |\vec{R},\vec{i},\alpha\rangle \} | = \sum_{\vec{R}} d_{\vec{R}} N_{\vec{R}}
\end{align}
where $N_{\vec{R}}$ is the dimension of the invariant subspace of the tensor product representation $R_1,\dots,R_L$, i.e.\ the number of copies of the singlet representation, corresponding to the distinct values of $\alpha$, and the factor $d_{\vec{R}}$ arises from the possible values of $\vec{i}$.  Note that the number of singlets $N_{\vec{R}}$ occurring in the fusion of $R_1,\dots,R_L$ is precisely the number of copies of $R_L^*$ occurring in the fusion of $R_1,\dots,R_{L-1}$.  Moreover, for any representation $Q$ (not necessarily an irrep), we have 
\begin{align} \label{eq:rep_fact}
    \sum_r d_r N_{r \to Q} = d_Q
\end{align} where the sum is over all irreps $r$ and $N_{r \to Q}$ denotes the number of copies of irrep $r$ in $Q$.  Using these facts, we have
\begin{align} 
\begin{split}
    \sum_{\vec{R}} d_{\vec{R}} N_{\vec{R}} & = \sum_{R_1,\dots,R_{L-1}} d_{R_1}\cdots d_{R_{L-1}}  \sum_{R_L} d_{R_L} N_{\vec{R}} \\
    & = \sum_{R_1,\dots,R_{L-1}} (d_{R_1}\cdots d_{R_{L-1}})^2 \\
    & = |G|^{L-1}
\end{split}
\end{align}
where we used the above facts to obtain the second line from the first; the final line follows from the identity $\sum_R d_R^2 = |G|$.  

Having counted $|G|^{L-1}$ orthonormal states $ |\vec{R},\vec{i},\alpha\rangle$, which are in the span of the states $|\phi_A(\vec{g})\rangle$, and noting $\dim\textrm{span} \{|\phi_A(\vec{g})\rangle\} = |G|^{L-1}$, we conclude that the states $ |\vec{R},\vec{i},\alpha\rangle$ form an orthonormal basis for $\textrm{span} \{|\phi_A(\vec{g})\rangle\}$.  From equation \eqref{eq:rho_A_phi_decomp}, we know that the vacuum reduced state $\rho_A$ is the maximally mixed state supported on the latter subspace, so we can change basis and conclude
\begin{align}
    \Tr_{\bar{A}}(|\psi_0\rangle \langle \psi_0|) = \frac{1}{|G|^{L-1}}\sum_{\vec{R}}\sum_{\vec{i},\alpha} |\vec{R},\vec{i},\alpha \rangle \langle \vec{R},\vec{i},\alpha|.
\end{align}
It follows that
\begin{align}
p_{\vec{R}} =\frac{1}{|G|^{L-1}} N_{\vec{R}} d_{\vec{R}}.
\end{align}
The ``$\log{\dim{R}}$'' term is
\begin{align}
\begin{split}
\sum_{\vec{R}} p_{\vec{R}} \sum_{i=1}^L \log{d_{R_i}}&  = 
\frac{1}{|G|^{L-1}} \sum_{\vec{R}}  N_{\vec{R}} d_{\vec{R}} \log{d_{\vec{R}}}  \\
& = \frac{1}{|G|^{L-1}} \sum_{i=1}^L \sum_{R_i} \log{d_{R_i}} \sum_{\{R_j\}_{j, j\neq i}}  N_{\vec{R}} d_{\vec{R}}.
\end{split}
\end{align}
For each $i$, also choose some boundary vertex index $a_i \in \{1,\dots,L\},\, a_i \neq i$. Below, we will also write $a_i$ as simply $a$. Then the rightmost sum, which depends on some fixed $i$ and $R_i$, may be re-written as 
\begin{align}
\begin{split}
\sum_{\{R_j\}_{j, j\neq i}}  N_{\vec{R}} d_{\vec{R}} & = \sum_{\{R_j\}_{j, j \neq i,a}} \sum_{R_a} d_{\vec{R}} N_{\vec{R}} \\
& = \sum_{\{R_j\}_{j, j \neq i,a}} \displaystyle\prod_{l \neq a} d_{R_l}  \sum_{R_a} d_{R_a} N_{\vec{R}}.
\end{split}
\end{align}
Then the new rightmost sum may be re-written using the identity in equation \eqref{eq:rep_fact} as
\begin{align}
\sum_{R_a} d_{R_a} N_{\vec{R}} = \displaystyle\prod_{m \neq a} d_{R_m}
\end{align}
so that we obtain
\begin{align}
\begin{split}
\sum_{\{R_i\}_{i, i\neq j}}  N_{\vec{R}} d_{\vec{R}}  &=   \sum_{\{R_j\}_{j, j \neq i,a}} \displaystyle\prod_{l \neq a} d_{R_l}^2 \\
& = d_{R_i}^2 |G|^{L-2}
\end{split}
\end{align}
and then, plugging back in,
\begin{align}
\begin{split}
\sum_{\vec{R}} p_{\vec{R}} \sum_{i=1}^L \log{d_{R_i}}& = \frac{1}{|G|} \sum_{i=1}^L \sum_{R_i} \log{d_{R_i}} d_{R_i}^2 \\
& = \frac{L}{|G|} \sum_{R \in \hat{G}} d_R^2 \log{d_R}
\end{split}
\end{align}
which is proportional to $L$, as claimed.  

For completeness, let us conclude by listing the entanglement entropy for the rectangular region for each of the three definitions. The distillable entropy is
\begin{align}
    \sum_{\vec{R}} p_{\vec{R}} S(\rho_{\vec{R}}) = \frac{1}{|G|^{L-1}} \sum_{\vec{R}} N_{\vec{R}} d_{\vec{R}} \log N_{\vec{R}},
\end{align}
as found in \cite{acoleyen2015entanglement}, where $\rho_{\vec{R}}$ is defined implicitly by equation \eqref{eq:rho_sector}. In general, this does not give the correct TEE; it is always zero in Abelian theories, for example.
The algebraic entropy is
\begin{align}
\begin{split}
    L\left[ \log |G| - \frac{1}{|G|}\sum_{R \in \hat{G}} d_R^2 \log d_R \right] - \log |G|.
    \end{split}
\end{align}
Finally, the extended Hilbert space entropy is
\begin{align}
    (L-1) \log |G|.
\end{align}

\section{Conclusion}

In this note, we have shown that Kitaev's finite group models obey a theorem which, at the level of slogans, says that ``states with locally zero energy density are locally indistinguishable.'' The theorem implies in particular that Kitaev's models have topological quantum order (TQO-1 and TQO-2) and moreover furnish a quantum error correcting code (QECC), a fact which, although well-appreciated, appears not to have been proved rigorously in the literature. In contrast, we have demonstrated that an analogous result cannot hold for excited states. Namely, contrary to intuitions one might have from typical gauge theory models, Wilson loop operators do not form a complete set of commuting observables.  Finally, we have also used our detailed analysis of the ground states to analyze the topological entanglement entropy, in particular demonstrating that the algebraic definition of entanglement entropy yields the correct topological entanglement entropy, correcting claims made previously in the literature. 

As was mentioned in the introduction, Kitaev's models can be generalized from finite groups to Hopf $\mathbb{C}^\ast$-algebras; the latter reduce to the former when the Hopf-algebra is taken to be the group algebra $\mathbb{C}[G]$ associated to $G$. It is interesting to ask to what extent the techniques we have used can be adapted to the Hopf-algebra case. Given the equivalence between these generalized Kitaev models and the Levin-Wen string net models, a successful generalization would therefore constitute a proof that the Levin-Wen models have TQO as well. We leave this question for future study.\\

\noindent \textbf{Acknowledgments.} SC would like to thank Zhenghan Wang for bringing the problem studied in this paper to his attention. We would like to thank Matthew Hastings, Patrick Hayden, Alexei Kitaev, and Xiao-Liang Qi for their helpful discussions. BR gratefully acknowledges NSF grant PHY 1720397. DD would like to thank God for all of His provisions and is supported by a National Defense Science and Engineering Graduate Fellowship. XH is supported by a Stanford Graduate Fellowship. SC acknowledges support from Simons Foundation, Virginia Tech, and Purdue University.

\appendix
\section{The ``irrep basis'' and the non-Abelian Fourier transform} \label{sec:irrep_basis}

We restrict our attention to finite groups, for the purpose of the quantum double model, but analogous results hold for compact groups.  Consider the Hilbert space $L^2(G) = \textrm{span}\{\ket{g} \, : \, g \in G\}$.  The group $G$ acts on $L^2(G)$ in at least two natural and distinct ways: left multiplication by $g$, i.e.\ as $L_g\ket{h} \equiv \ket{gh}$, and right multiplication by $g^{-1}$, i.e.\ as $R_g \ket{h} \equiv \ket{hg^{-1}}$.  These actions commute, and therefore we can simultaneously decompose the Hilbert space into irreducible representations of both actions.  In particular, it turns out the Hilbert space decomposes as 
\begin{align}
L^2(G)  = \bigoplus_{\mu \in \hat{G}} \mathcal{H}_{\mu} \otimes \mathcal{H}_{\mu^*}
\end{align}
where the sum is over all inequivalent irreducible representations (``irreps'') $\mu$, with the set of irreps denoted by $\hat{G}$.  The Hilbert space of an irrep $\mu$ is denoted by $\mathcal{H}_{\mu}$, and $\mu^*$ denotes the dual representation.  In the above decomposition, action of $G$ by left multiplication acts non-trivially only on the left factors $\mathcal{H}_\mu$, whereas the action of $G$ by right multiplication (of $g^{-1}$) acts non-trivially only on the right factors $\mathcal{H}_{\mu^*}$.  The above decomposition is not necessarily obvious, but it is a basic fact in the representation theory of finite groups.

Compatible with the above decomposition, we can define the orthonormal basis $\ket{\mu; a,b}$ of $L^2(G)$, where $\mu$ runs over all irreps, where $a=1,\dots,d_\mu$ is a label of some basis of irrep $\mu$ with dimension $d_\mu$, and where $b=1,\dots,d_\mu$ is a label of the corresponding dual basis of $\mu^*$, noting $d_\mu=d_{\mu^*}$. This is called the ``irrep basis.'' 

A foundational fact in the representation theory of finite groups yields the following transformation between orthonormal bases $\ket{g}$ and $\ket{\mu;a,b}$:
\begin{align} \label{eq:irrep_basis_defn}
\begin{split}
\ket{\mu;a,b} & = \sqrt{\frac{d_\mu}{|G|}} \sum_{g \in G} D^\mu_{ab}(g) \ket{g} \\
\ket{g} & = \sum_{\mu \in \hat{G}} \sum_{a,b=1}^{d_\mu} \sqrt{\frac{d_\mu}{|G|}}  D^\mu_{ab}(g)^* \ket{\mu;a,b}
\end{split}
\end{align}
where again $\hat{G}$ is the set distinct irreps of $G$, where $d_\mu$ is the dimension of irrep $\mu$, and where $D^\mu_{ab}(g)$ denotes the $a,b$ matrix element of the operator $D^\mu(g)$ that represents $g$ under irrep $\mu$.  (For instance, if $G=SU(2)$, and $\mu$ is chosen as the spin-$\frac{1}{2}$ representation, then $D^\mu(g)$ would be a $2 \times 2$ matrix with indices $a,b$, and it could be calculated using the Pauli matrices.) The above transformation may be considered a consequence of the Peter-Weyl theorem for finite groups.   

To be concrete, the irrep basis is defined by choosing a basis of each irrep $\mu$ of $G$ and then using equation \eqref{eq:irrep_basis_defn} to express $\ket{\mu;a,b}$ in terms of states $\ket{g}$.  Given the orthonormality of the $\vec{g}$ basis, the orthonormality of the $\ket{\mu;a,b}$ basis lies in the orthogonality relation (the Schur ``grand'' orthogonality relation)
\begin{align} \label{eq:schur_orthogonality}
\frac{d_\mu}{|G|} \sum_{g \in G} D^{\mu' *}_{a'b'}(g) D^\mu_{ab}(g) =  \delta_{\mu\mu'} \delta_{aa'}\delta_{bb'} 
\end{align}

The states of gauge theories expressed in the irrep basis are also called ``spin networks'' \cite{baez1996spin}, and they are related to string nets \cite{buerschaper2009mapping}.  For a thorough description of the quantum double model in the irrep basis, see \cite{buerschaper2009mapping}, which relates the quantum double model to the string net models of Levin-Wen \cite{levin2005string}.

One may calculate that the gauge-invariant Hilbert space of the quantum double model decomposes as (see equation \eqref{eq:H_gauge}, Section \ref{subsec:TEE}) 
\begin{align} \label{eq:H_gauge_decomp}
\mathcal{H}_{\textrm{phys}} = \bigoplus_{\mu(e)} \bigotimes_v \textrm{Singlet}\left( \bigotimes_{e_v \in \textrm{Edges}(v)} \mathcal{H}_{\mu(e_v)} \right),
\end{align}
where the direct sum is over all assignments of irreps $\mu(e)$ to all oriented edges $e$.  The tensor product $\bigotimes_v$ is over all vertices $v$.  The notation $\textrm{Singlet}(R)$ for some (possibly non-irreducible) representation $R$ denotes the singlet subsector of $R$, i.e.\ the subspace of trivial representations. (The latter is also denoted $\textrm{Hom}(1,R)$.) The tensor product $\bigotimes_{e_v \in \textrm{Edges}(v)}$ is over all edges $e_v$ connected to vertex $v$, and finally $\mathcal{H}_{\mu(e_v)}$ is the Hilbert space associated to the irrep $\mu(e_v)$ assigned to edge $e_v$, with dual representations taken for ingoing edges.  

The calculation that the gauge-invariant Hilbert space decomposes this way amounts to unpacking of notation.  

\section{Appendix: Entanglement entropy in gauge theory and the algebraic formalism} \label{sec:EE_gauge}

In an ordinary lattice theory, the Hilbert space may be expressed as a tensor product of lattice degrees of freedom (on vertices or edges).  The algebra of operators on the Hilbert space then factorizes also, allowing for a natural definition of a partial trace, a reduced density matrix, and a von Neumann entropy, used to define entanglement entropy.

Meanwhile, when viewing a lattice theory like the quantum double model as a gauge theory, we want to restrict our attention to the gauge-invariant or ``physical'' Hilbert space $\mathcal{H}_{\textrm{phys}} \subset \mathcal{H}$, as in equation \eqref{eq:H_gauge}, forgetting the embedding into the larger ``unphysical'' or ``extended'' Hilbert space $\mathcal{H}$ of the lattice.  Let $L(\mathcal{H})$ denote the algebra of linear operators on $\mathcal{H}$, and likewise for $L(\mathcal{H}_{\textrm{phys}})$. Then $L(\mathcal{H}_{\textrm{phys}})$ are the ``physical'' or gauge-invariant observables, and we can also think of them as the subset of $L(\mathcal{H})$ that commutes with gauge transformations, subsequently restricted to $\mathcal{H}_{\textrm{phys}}$.   

For any region $A$ of the lattice, we can define a sub-algebra $\mathcal{A} \subset L(\mathcal{H}_{\textrm{phys}})$ of the gauge-invariant observables associated to region $A$.  In particular, we define the operators in $\mathcal{A}$ as (the restriction to $\mathcal{H}_{\textrm{phys}}$ of) all operators in $L(\mathcal{H})$ that are local to $A$ and commute with all gauge transformations.  

In general, given a region $A$, although we can define the local gauge-invariant observables $\mathcal{A}$, the full algebra of gauge-invariant operators $L(\mathcal{H}_{\textrm{phys}})$ will not factorize as $L(\mathcal{H}_{\textrm{phys}}) \stackrel{?}{=} \mathcal{A} \otimes \mathcal{B}$ for any complementary factor $\mathcal{B}$.\footnote{For a paradigmatic example demonstrating that no factorization exists, consider a Wilson loop that straddles the boundary of $A$, crossing from $A$ to $\bar{A}$ and back.  Such an operator cannot be written as a sum of products of gauge-invariant operators on $A$ and $\bar{A}$, so evidently the tensor product of the gauge-invariant algebras of observables on $A$ and $\bar{A}$ is a proper subspace of the entire gauge-invariant algebra.}
Therefore, the usual definitions of partial trace, reduced density matrix, and entanglement entropy must be modified.  

One option is to work entirely within the ``extended'' or ``unphysical'' Hilbert space $\mathcal{H}$ and calculate entanglement entropy according to the usual definition.  However, this definition may be undesirable in the context of gauge theory, where one often wants to view the theory as intrinsic to $\mathcal{H}_{\textrm{phys}}$, without reference to some embedding.  

A second option for defining entanglement entropy in gauge theories is to use the ``algebraic'' definition of entanglement entropy, applied to the gauge-invariant Hilbert space and observables.  In this algebraic framework \cite{ohya2004quantum, casini2014remarks}, one defines a reduced density matrix and entanglement entropy for any state, taken with respect to a sub-algebra such as $\mathcal{A}$.  The following brief review follows Section 3 of \cite{mazenc2019target}.  We will assume a finite-dimensional Hilbert space throughout.  To define the entanglement entropy, first note that for general Hilbert space $\mathcal{V}$ and algebra $\mathcal{A} \subset L(\mathcal{V})$, there exists a decomposition of the Hilbert space as a direct sum of tensor products,
\begin{equation} \label{eqn:V_decomposition}
\mathcal{V} = \bigoplus_i \mathcal{V}_{i} \otimes \widetilde{\mathcal{V}}_{i} 
\end{equation}
such that the operators $\mathcal{O}_A \in \mathcal{A}$ are precisely those which take the form
\begin{equation}
\mathcal{O}_A = \sum_i \mathcal{O}_{A,i} \otimes \mathds{1}_{\widetilde{\mathcal{V}}_{i}}
\end{equation}
for some $\mathcal{O}_{A,i} \in L(\mathcal{V}_i)$.  (This follows from an elementary version of the Artin-Wedderburn theorem.) Schematically, we can also write
\begin{align}
\mathcal{A} = \bigoplus_i  L(\mathcal{V}_{i}) \otimes \mathds{1}_{\widetilde{\mathcal{V}}_{i}}.
\end{align}
Define $\Pi_i$ as the projectors onto the sectors $\mathcal{V}_{i} \otimes \widetilde{\mathcal{V}}_i$ in equation \eqref{eqn:V_decomposition}.  For any state $\ket{\psi} \in \mathcal{V}$, note $\Pi_i \ket{\psi}\bra{\psi} \Pi_i \in L(\mathcal{V}_{i}) \otimes L(\widetilde{\mathcal{V}}_i)$, and define
\begin{align}
\begin{split}
\rho_{A,i}& = \Tr_{\widetilde{\mathcal{V}}_i}(\Pi_i \ket{\psi}\bra{\psi}) \\
p_i &= \Tr(\Pi_i \ket{\psi}\bra{\psi})
\end{split}
\end{align}
Then one defines the entanglement entropy $S(\ket{\psi},\mathcal{A})$ of $\ket{\psi}$ with respect to sub-algebra $\mathcal{A}$ as 
\begin{align} \label{eq:EE_algebraic}
S(\ket{\psi},\mathcal{A}) = - \sum_i p_i \log(p_i) + \sum_i p_i S(\rho_{A,i}).
\end{align}
An equivalent definition is 
\begin{align}
    S(\ket{\psi},\mathcal{A}) = S(\mathcal{P}_{\mathcal{A}}(\ket{\psi}\bra{\psi})),
\end{align}
where the superoperator $\mathcal{P}_{\mathcal{A}}$ projects the density matrix $\ket{\psi}\bra{\psi}$ into the algebra $\mathcal{A}$.  (However, when using this shorthand definition, the image of the projection must be considered to lie in the abstract algebra $\mathcal{A}$, rather than inside its embedding $\mathcal{A} \subset L(\mathcal{V})$; the latter has extra identity factors, and the entropy of these factors is not included in the algebraic definition of the entropy.)

The first and second term of \eqref{eq:EE_algebraic} are sometimes referred to as the ``classical'' and ``quantum'' pieces, respectively.  In the context of gauge theory, or more generally in an operational context where observers associated to $\mathcal{A}$ are restricted to operations in $\mathcal{A}$, the second term alone quantifies the distillable entanglement \cite{schuch2004nonlocal, soni2016aspects, acoleyen2015entanglement}.  For this reason, the entanglement entropy is sometimes defined using the second term alone.  

A third option for defining entanglement entropy in gauge theories is therefore to use equation \eqref{eq:EE_algebraic} but dropping the first term.

\bibliographystyle{plainnat}
\bibliography{KitaevModel_bib}

\end{document}